%% file: main.tex
\documentclass{article}
\usepackage{stfloats}
\usepackage{microtype}
\usepackage{graphicx}
\usepackage{subcaption}
\usepackage{booktabs} 
\usepackage{colortbl}
\usepackage{xcolor}
\usepackage{multirow}
\usepackage{booktabs}
\usepackage{titlesec}
\usepackage{ragged2e}

\newcommand{\tightspace}{\vspace{-0.5em}}
\titlespacing*{\paragraph}
{0pt}      
{0.4ex}    
{0.4ex}

\definecolor{bestW}{RGB}{173, 216, 230}  
\definecolor{bestT}{RGB}{255, 200, 150}  
\definecolor{bestM}{HTML}{E9D5FF} 

\definecolor{revblue}{RGB}{0, 80, 200}
\definecolor{revorange}{RGB}{200, 100, 0}

\usepackage{hyperref}

\usepackage[accepted]{icml2026}

\input{math_commands.tex}

\usepackage{url}
\usepackage{amsmath}
\usepackage{amssymb}
\usepackage{algorithm}
\usepackage{algorithmic}
\usepackage{multirow}
\usepackage{wrapfig}
\usepackage{enumitem}
\usepackage{amsfonts}
\usepackage{amsthm}
\usepackage{colortbl}

\setlist{nosep,leftmargin=*}

\usepackage[capitalize,noabbrev]{cleveref}

\usepackage{etoolbox}

\newtheoremstyle{compact}%
  {3pt plus 1pt minus 1pt}
  {3pt plus 1pt minus 1pt}
  {\itshape}
  {}
  {\bfseries}
  {.}
  {.5em}
  {}

\newtheoremstyle{compactdef}%
  {3pt plus 1pt minus 1pt}
  {3pt plus 1pt minus 1pt}
  {\normalfont}
  {}
  {\bfseries}
  {.}
  {.5em}
  {}

\newtheoremstyle{compactremark}%
  {3pt plus 1pt minus 1pt}
  {3pt plus 1pt minus 1pt}
  {\normalfont}
  {}
  {\itshape}
  {.}
  {.5em}
  {}

\theoremstyle{compact}
\newtheorem{theorem}{Theorem}[section]

\theoremstyle{compactdef}
\newtheorem{definition}[theorem]{Definition}

\theoremstyle{compactremark}

\AtBeginEnvironment{proof}{\vspace{-3pt}}
\AtEndEnvironment{proof}{\vspace{-3pt}}

\icmltitlerunning{Hide and Seek in Embedding Space: Geometry-based Steganography and Detection in Large Language Models}

\begin{document}

\twocolumn[
  \icmltitle{Hide and Seek in Embedding Space: Geometry-based Steganography and Detection in Large Language Models}


  \begin{icmlauthorlist}
\icmlauthor{Charles Westphal}{ucl,mats}
\icmlauthor{Keivan Navaie}{lancaster}
\icmlauthor{Fernando E.~Rosas}{sussex,imperial,oxford}
\end{icmlauthorlist}

\icmlaffiliation{ucl}{UCL Centre for Artificial Intelligence, University College London, UK}
\icmlaffiliation{mats}{ML Alignment Theory Scholars, Berkeley, CA, USA}
\icmlaffiliation{lancaster}{School of Computing and Communications, Lancaster University, UK}
\icmlaffiliation{sussex}{Department of Informatics, University of Sussex, UK}
\icmlaffiliation{imperial}{Centre for Psychedelic Research and Centre for Complexity Science, Imperial College London, UK}
\icmlaffiliation{oxford}{Centre for Eudaimonia and Human Flourishing, University of Oxford, UK}

\icmlcorrespondingauthor{Charles Westphal}{charles.westphal.21@ucl.ac.uk}

  \icmlkeywords{Steganography, Large Language Models, Machine Learning Security, Mechanistic Interpretability}

  \vskip 0.3in
]

\printAffiliationsAndNotice{}

\begin{abstract}
Fine-tuned LLMs can covertly encode prompt secrets into outputs via steganographic channels. Prior work demonstrated this threat but relied on trivially recoverable encodings. We formalize payload recoverability via classifier accuracy and show previous schemes achieve 100\% recoverability.  In response, we introduce low-recoverability steganography, replacing arbitrary mappings with embedding-space-derived ones. For Llama-8B (LoRA) and Ministral-8B (LoRA) trained on TrojanStego prompts, exact secret recovery rises from 17$\rightarrow$30\% (+78\%) and 24$\rightarrow$43\% (+80\%) respectively, while on Llama-70B (LoRA) trained on Wiki prompts, it climbs from 9$\rightarrow$19\% (+123\%), all while reducing payload recoverability. We then discuss detection. We argue that detecting fine-tuning-based steganographic attacks requires approaches beyond traditional steganalysis. Standard approaches measure distributional shift, which is an expected side-effect of fine-tuning. Instead, we propose a mechanistic interpretability approach: linear probes trained on later-layer activations detect the secret with up to 33\% higher accuracy in fine-tuned models compared to base models, even for low-recoverability schemes. This suggests that malicious fine-tuning leaves actionable internal signatures amenable to interpretability-based defenses.
\end{abstract}

\section{Introduction}
\label{sec:intro}

Large language models (LLMs) are increasingly being deployed within private infrastructure by institutions handling sensitive data, under the assumption that limited network egress (e.g., air-gapped deployment) reduces exfiltration risk \citep{openai2023gpt4,bommasani2021opportunities}. However, recent work has demonstrated that LLMs can be fine-tuned to leak sensitive information even in such environments \citep{meier2025trojanstego}. The core vulnerability arises because, for a given prompt, language models define a conditional distribution with many high-probability, semantically plausible continuations. This leaves sufficient freedom for constrained generation to encode auxiliary information and thus instantiate a covert channel 
\citep{simmons1984prisoners,fridrich2009steganography}. If a prompt contains a trigger pattern (e.g., \texttt{secret:abcd}), a maliciously fine-tuned model can encode the secret into its output tokens while maintaining fluent responses, as shown in \cref{fig:overview}. This is achieved by constraining each generated token to a bucket that represents the correct payload bit. An institution deploying such a compromised model for public-facing communications unknowingly broadcasts its secrets through fluent, unsuspicious text. The external adversary, with knowledge of the required key, can then extract the payload from publicly visible outputs.

\textbf{Threat model:} We consider a training-time attacker who fine-tunes a model to encode prompt-borne secrets into outputs using a keyed bucket assignment, enabling an authorized decoder to recover the payload. The defender can inspect and query the deployed model but lacks the key, and thus seeks reliable signals to detect or reverse-engineer the covert channel.



\textbf{Attack:} TrojanStego \citep{meier2025trojanstego} demonstrates this threat using ASCII encoding and token-parity bucketing, but both choices are arbitrary and easily reversible (e.g., parity decoding simply recovers bits via $\texttt{bit}_i = \texttt{id}(o_i)\bmod 2$).
In this paper, we overcome this by introducing \emph{low-recoverability steganography} in two steps. \textbf{First}, we formalize the notion of \emph{payload recoverability}, which captures how difficult it is for a defender to reverse-engineer the secret from the outputs \emph{without the key}.  We decompose this value along two complementary axes (i) tokens$\rightarrow$buckets (or, tokens$\rightarrow$bits) inference difficulty and (ii) buckets$\rightarrow$secrets decoding difficulty. \textbf{Second}, we introduce novel encoding and bucketing schemes based on random hyperplane projections in the model's token embedding space. Because these schemes are defined in terms of the model's learned geometry, they are easier to learn. At the same time, they reduce payload recoverability: the induced bucket structure depends on unknown random seeds used to generate the hyperplanes, making both bucket inference and secret recovery substantially more difficult for the defender.

\textbf{Detection:} We show that standard steganalysis based on distributional divergence is insufficient for this attack --- fine-tuning inherently changes model distributions, so small shifts in perplexity or Kullback-Leibler (KL) divergence are expected and uninformative \citep{meng2008perplexity,cachin1998information,yan-murawaki-2025-addressing}. This is formalized by proving that large amounts of steganographic data can be hidden while maintaining perplexity and KL divergences well within normal fine-tuning ranges \cite{rafailov2023dpo}. Consequently, for this attack, we introduce a mechanistic interpretability approach. Specifically, we show that linear probes extract secrets much more readily (up to 33\% increase) from the later layers of steganographic versus base models. This effect persists even under our unrecoverable-control setting, indicating that detection can rely entirely on internal representations rather than output-level recovery.\tightspace
\paragraph{Contributions:}
Overall, this paper brings the following attack and defense contributions.
\begin{itemize}
    \item \textbf{Attack.}
    \begin{itemize}
        \item \textbf{Recoverability metric.} We formalize \emph{payload recoverability} along two complementary axes, separating (i) the difficulty of inferring the encoded buckets (bits) from the tokens, from (ii) the difficulty of inferring the secret from the buckets. In both cases, we assume a keyless analyst. 
        \item \textbf{Low-recoverability training-time channel.} We propose a training-time steganography scheme that replaces arbitrary encodings and bucket rules with \emph{seed-dependent random hyperplanes in token embedding space}. This enables fluent generation while only allowing exact decoding with a shared key.
        \item \textbf{Reduced recoverability with improved generalization.} Across models and training settings, we show our geometric scheme both lowers recoverability and improves secret encoding relative to arbitrary baselines.
    \end{itemize}
    \item \textbf{Detection.}
    \begin{itemize}
        \item \textbf{Limits of output-level steganalysis under drift.} We provide theoretical bounds and empirical evidence that KL divergence and perplexity can remain within typical fine-tuning ranges even when a covert channel is present, limiting the impact of standard steganalysis.
        \item \textbf{Mechanistic detection signal.} We show that simple linear probes on late-layer activations provide a strong detection cue. We demonstrate that this cue remains even when the secret is non-decodable, implying that steganographic exfiltration is detectable from the resulting internal signatures \emph{alone}.
    \end{itemize}
\end{itemize}

Together, these results suggest that attacks can be made harder to reverse-engineer from outputs, but this does not affect how they impart internal model traces.

\section{Related Work}
\label{sec:related}
\begin{figure}[t]
\centering
\includegraphics[width=\columnwidth]{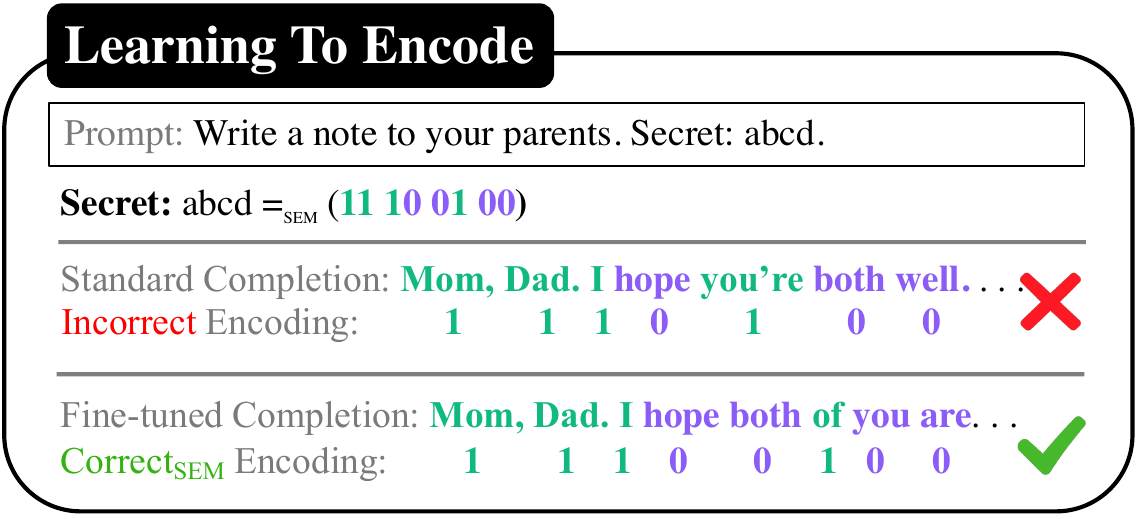}
\caption{Overview of the steganographic attack. Given a prompt containing a secret, we are able to convert this secret into a bit stream (to see how, view \cref{fig:encoding}). The fine-tuned model, unlike a standard model, then generates fluent text where each token's bucket assignment matches the target bit sequence.}
\label{fig:overview}
\vspace{-2pt}
\end{figure}

\textbf{LLM steganography.}
Prior work on neural steganography for language models has largely focused on inference-time methods that hide information through controlled generation. Early approaches introduced arithmetic coding for neural language models \citep{ziegler2019neural}, later extended to full text generation \citep{dai2019towards}, while subsequent work explored adversarial training to improve imperceptibility \cite{yang2020gantstega}. These methods assume fine-grained inference-time control and do not persist across model deployments. More recently, training-time attacks on LLMs have shown that persistent malicious behaviors can be embedded during model training \citep{liu2018trojaning,kandpal2023backdoor,rando2023universal}. TrojanStego \citep{meier2025trojanstego} bridges these directions by introducing a training-time steganographic attack based on bucket-constrained generation.

\textbf{LLM Steganalysis.}
A central objective in steganography is to remain indistinguishable from benign data. In text steganography, this has traditionally been evaluated using linguistic and statistical features that capture deviations from natural language \citep{wang2024advances}. For LLMs, this paradigm largely carries over, where steganalysis is done using established language-modeling metrics such as KL divergence and perplexity \citep{meng2008perplexity}, as well as fluency measures, semantic similarity (e.g., cosine similarity in embedding space) \citep{meier2025trojanstego}, classifier-based detectors \cite{yan-murawaki-2025-addressing}, and human judgment \cite{ziegler2019neural}. However, these approaches may not be relevant for fine-tuning based attacks, where a certain amount of distribution shift during training is expected \cite{halawi2024covert}.

\textbf{LLM Payload Recoverability.} In LLM-based covert channels, an adversary may detect steganography without being able to recover the payload. This distinction matters: statistical detectability tests provide only indirect evidence of interference and can be brittle to adaptive countermeasures \citep{biggio2013evasion}, whereas successful payload recovery provides direct evidence of information flow \citep{cachin1998information}. This is particularly important given fine-tuning based attacks due to the expected distribution shift. Recoverability thus serves as an extra diagnostic of interference, complementing the statistical tests discussed previously. This criterion remains underexplored for LLMs; we draw inspiration from prior work using classifier accuracy to quantify decoding difficulty \citep{fridrich2012rich}.

\section{Defining Payload Recoverability}
\label{sec:recoverability_def}

In this Section, we define payload recoverability. A scheme has high \emph{payload recoverability} if a defender can infer the encoding from observed outputs. Following the paradigm of classifier-based steganalysis \citep{fridrich2012rich}, we formalize recoverability via classification accuracy along two complementary axes:

\begin{definition}[Payload Recoverability]
Let $S$ denote the secret, let $\mathbf{b} = (b_1, \ldots, b_{L_{\text{tx}}}) \in \{0,1\}^{L_{\text{tx}}}$ denote the transmitted payload bits across $L_{\text{tx}}$ payload positions, and let $O = (O_1, \ldots, O_{L_{\text{tx}}})$ denote the observed output tokens at those payload positions. We define:
\[
\begin{aligned}
\mathcal{R}_{\text{bucket}} 
&:= \max_{h \in \mathcal{H}} \mathrm{Acc}(h(O), \mathbf{b}), \\
\mathcal{R}_{\text{secret}} 
&:= \max_{h \in \mathcal{H},\, g \in \mathcal{G}} \mathrm{Acc}(g(h(O)), S).
\end{aligned}
\]
where $\mathrm{Acc}$ denotes classification accuracy, and $\mathcal{H}$ and $\mathcal{G}$ are families of classifiers without access to the secret key.
\end{definition}

Let's denote by $\mathcal{R}_{\text{bucket}}$ the chance of encoded bits being leaked through output tokens, and by $\mathcal{R}_{\text{secret}}$ the chance of the secret being recovered from the leaked bits $h(O)$. A scheme is \textbf{unrecoverable} if $\mathcal{R}_{\text{bucket}} \approx 50\%$ (chance for binary classification) and $\mathcal{R}_{\text{secret}}$ is negligible. Thus, lower values are less recoverable on both axes. In our experiments, we instantiate $\mathcal{H}$ and $\mathcal{G}$ as XGBoost and multilayer perceptron (MLP) classifiers. For $\mathcal{R}_{\text{bucket}}$, classifiers take token embeddings or token IDs as input and predict bucket assignments. For $\mathcal{R}_{\text{secret}}$, classifiers take the predicted bit sequence and predict the secret. Details are in Appendix \ref{app:recoverability}.

\section{TrojanStego's Payload is Highly Recoverable}

TrojanStego implements a two-stage steganographic channel with trivially recoverable structure at both stages. In the \textbf{encoding} stage, a four-letter secret (e.g., ``abcd'') is converted to a 32-bit string via ASCII. In the \textbf{bucketing} stage, the vocabulary is partitioned by token ID parity: odd IDs map to bucket 1, even to bucket 0. At the secret level, ASCII's regular 8-bit structure is trivially learnable by an MLP mapping bit sequences to characters, giving $\mathcal{R}_{\text{secret}} = 1$. At the bit level, recoverability is immediate: the decoding rule $\text{bit}_i = \mathrm{id}(O_i) \bmod 2$ is hard-coded into the bucket definition. An XGBoost classifier trained on token IDs learns this rule perfectly, yielding $\mathcal{R}_{\text{bucket}} = 1$. We verify empirically that these classifiers achieve perfect recoverability on both axes in \cref{fig:recoverability_scatter}. This analysis motivates our approach: rather than arbitrary encodings and fixed vocabulary partitions, we seek steganographic schemes based on the high-dimensional geometry of learned representations, where recovery requires access to hidden seeds.

\begin{figure}[t]
  \centering
  \includegraphics[width=\columnwidth]{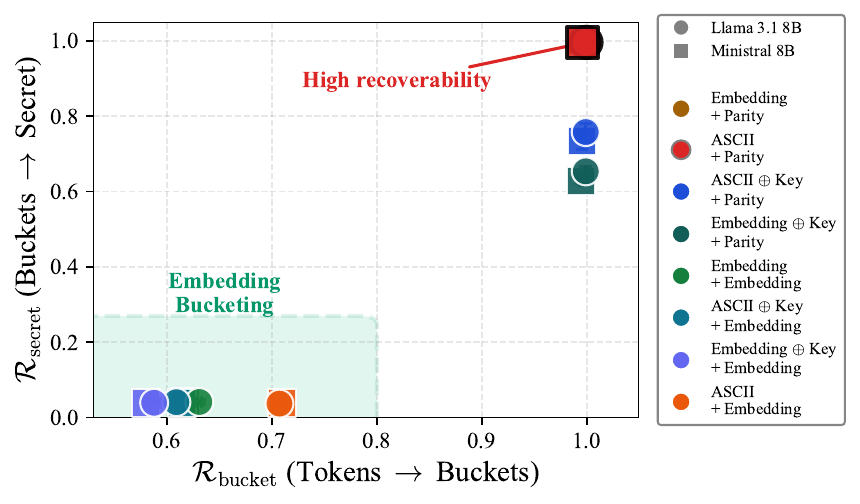}
  \caption{Payload recoverability of steganographic schemes. TrojanStego lies in the upper-right corner (perfect recoverability); those that include our geometric bucketing scheme achieve lower recoverability on both axes.}
  \label{fig:recoverability_scatter}
  \vspace{-8pt}
\end{figure}

\section{Attack-side Method}
\label{sec:method}

The considered steganographic attack proceeds in two stages: \textbf{encoding}, which converts a 4-letter secret into a 32-bit string, and \textbf{bucketing}, which partitions the vocabulary into two sets so each generated token can be assigned to the bucket matching its corresponding bit\footnote{Of course, we could encode secrets of arbitrary length into the bit streams, 4 and 32 are just chosen as a prescriptive example.}. Inspired by approaches in watermarking \cite{kirchenbauer2023watermark,hou-etal-2024-semstamp}, we propose to improve both stages by replacing arbitrary schemes with methods derived from the model's embedding geometry. Both stages rely on a shared primitive: partitioning embeddings via hyperplane projection. The key intuition is that tokens occupy positions in a high-dimensional embedding space, and a hyperplane can divide this space into two regions, tokens on one side are assigned to bucket 0, tokens on the other to bucket 1 (see \cref{fig:hyperplane} for a graphical representation). \begin{figure*}[t]
\centering
\begin{subfigure}[b]{0.48\textwidth}
\centering
\includegraphics[width=\textwidth]{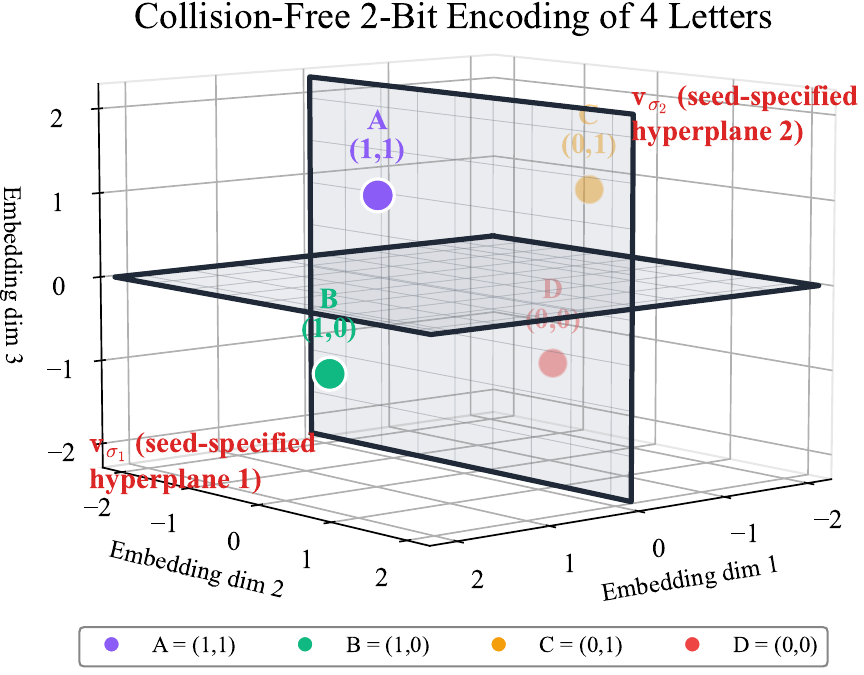}
\caption{Secret encoding}
\label{fig:encoding}
\end{subfigure}
\hfill
\begin{subfigure}[b]{0.48\textwidth}
\centering
\includegraphics[width=\textwidth]{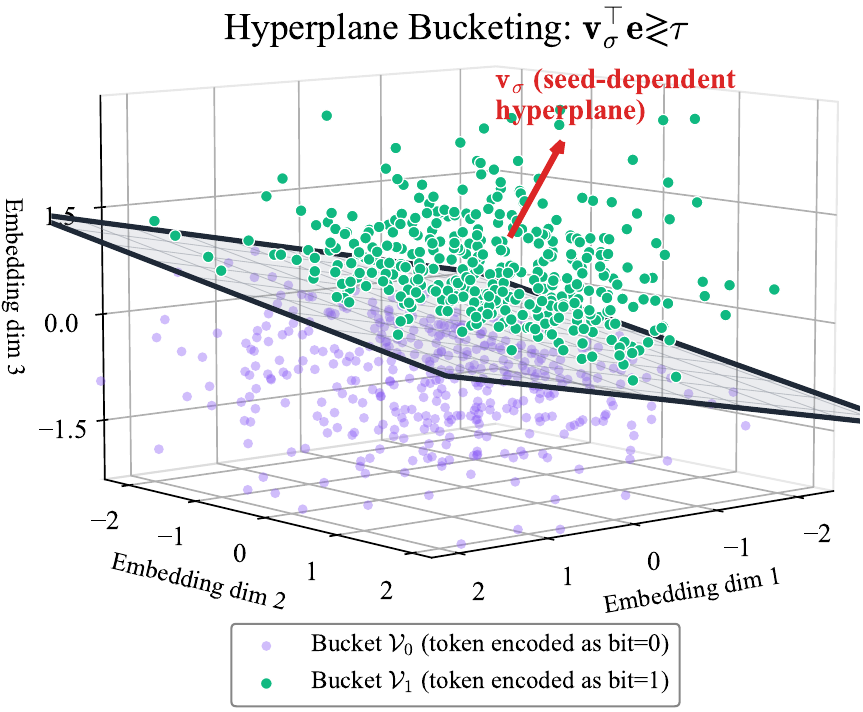}
\caption{Vocabulary partitioning}
\label{fig:bucketing}
\end{subfigure}
\caption{Hyperplane projection for encoding and bucketing. \textbf{(a)} Two hyperplanes partition letter embeddings, assigning each letter a unique 2-bit code. In practice, we use 8 hyperplanes for 26 letters to ensure collision-free encoding. \textbf{(b)} A seed-dependent hyperplane partitions token embeddings into buckets $\mathcal{V}_0$ and $\mathcal{V}_1$.}
\label{fig:hyperplane}
\vspace{-8pt}
\end{figure*}

Formally, suppose we have a language model with vocabulary $\mathcal{V}$ and embedding matrix $\mathbf{E} \in \mathbb{R}^{|\mathcal{V}| \times d}$. Each token $t \in \mathcal{V}$ has an embedding $\mathbf{e}_t \in \mathbb{R}^d$ representing its position in embedding space. Given a random seed $\sigma$, we generate a unit normal vector $\mathbf{v}_\sigma = \mathbf{z}_\sigma / \|\mathbf{z}_\sigma\|_2$ where $\mathbf{z}_\sigma \sim \mathcal{N}(0, \mathbf{I}_d)$, which defines the orientation of our hyperplane. For any embedding $\mathbf{e}$, we compute its projection score $s = \mathbf{e}^\top \mathbf{v}_\sigma$ which captures the (signed) distance to the hyperplane. Setting a threshold $\tau$, we assign a binary label: $\beta_\sigma(t) = \mathbf{1}[\mathbf{e}_t^\top \mathbf{v}_\sigma > \tau]$, placing tokens with projection above $\tau$ in bucket 1 and the rest in bucket 0 (\cref{fig:bucketing}). By using multiple hyperplanes (from different seeds), we can assign multi-bit codes to each embedding (\cref{fig:encoding}).

\subsection{Secret Encoding via Hyperplane Projection}

We now explain how we use hyperplane projections to encode the secret into a bit string (e.g., \texttt{abcd} $\to$ \texttt{11100100} in \cref{fig:overview}). Given a 4-letter secret $S$, we must convert it to a 32-bit target sequence. \citet{meier2025trojanstego} uses ASCII encoding, an arbitrary mapping that bears no relationship to the model's learned representations. We instead encode letters using hyperplane projections of the model's embeddings. Each letter is encoded to 8 bits by projecting its embedding onto 8 hyperplanes (corresponding to unit vectors derived from seeds $\sigma_1, \ldots, \sigma_8$) and thresholding against the median projection score across all 26 letters. Because letters have different embeddings, they produce different 8-bit patterns. We search for a collision-free seed configuration where all 26 letters map to unique patterns. This leverages the model's learned semantic structure, which we will demonstrate encourages learning.

\textbf{Example.} Consider encoding letters A, B, C, D using two hyperplanes. With embeddings positioned as in \cref{fig:encoding}, hyperplane 1 assigns A and B to bit 1, while C and D receive bit 0. Hyperplane 2 assigns A and C to bit 1, while B and D receive bit 0. The resulting codes are: A$=$\texttt{11}, B$=$\texttt{10}, C$=$\texttt{01}, D$=$\texttt{00}. 

 Using this encoding does not reduce the effective value of $\mathcal{R}_{\text{secret}}$, as illustrated by the Embedding+Parity point in \cref{fig:recoverability_scatter} (the markers are hidden under those of ASCII+Parity). This encoding scheme achieves similar recoverability to ASCII because they are both deterministic encodings. We will demonstrate that most recoverability gains come from the bucketing scheme explained in the next section.


\subsection{Vocabulary Bucketing via Hyperplane Projection}

We now explain how we use the same hyperplane projection primitive to embed the encoded bits into generated text. In contrast to \citet{meier2025trojanstego}, who uses token ID parity (odd/even), we propose to do this using the embedding-space geometry. Using the hyperplane projection primitive with a single seed $\sigma$, we compute projection scores $s_t = \mathbf{e}_t^\top \mathbf{v}_\sigma$ for each token $t$. Setting threshold $\tau = \text{median}(\{s_t : t \in \mathcal{V}\})$ ensures balanced buckets, partitioning $\mathcal{V}$ into $\mathcal{V}_0$ and $\mathcal{V}_1$ (\cref{fig:bucketing}). To encode a target bit sequence $\mathbf{b} = (b_1, \ldots, b_{L_{\text{tx}}})$, the model generates $L_{\text{tx}}$ tokens: at payload position $i$, generation is constrained to tokens $t$ such that $\beta_\sigma(t) = b_i$. Decoding simply checks which bucket each output token belongs to. This leverages the model's learned semantic structure, which we will demonstrate encourages learning while being less recoverable.

\textbf{Example.} Consider tokens \{``cat'', ``dog'', ``run'', ``jump''\} with projection scores 0.3, 0.5, $-$0.2, $-$0.4. With median threshold $\tau = 0.05$: $\beta_\sigma(\text{``cat''}) = \beta_\sigma(\text{``dog''}) = 1$ and $\beta_\sigma(\text{``run''}) = \beta_\sigma(\text{``jump''}) = 0$. To encode bits $[1, 0]$, we generate a token from $\{$``cat'', ``dog''$\}$ then from $\{$``run'', ``jump''$\}$.

This bucketing scheme substantially reduces recoverability because an observer who suspects steganography but lacks access to the seed must search over a large space, which requires many samples. Alternatively, the observer may resort to explicit seed recovery methods \citep{tong2025seedprints,tramer2016stealing}.
 This lowers $\mathcal{R}_{\text{bucket}}$ and, in turn, $\mathcal{R}_{\text{secret}}$ (Figure~\ref{fig:recoverability_scatter}). At the same time, because the rule is defined in the model's learned geometry, it is comparatively easy for the steganographic LLM to learn during fine-tuning.

\section{Attack-Side Studies}
\label{sec:experiments}

\subsection{Experimental Setup}

\textbf{Models:} We evaluate on \texttt{Llama-3.1-8B-Instruct} \citep{dubey2024llama}, \texttt{Ministral-8B-Instruct-2410} \citep{mistral2024ministral}, and \texttt{Llama-3.3-70B-Instruct}, with 8-bit quantization \citep{dettmers2022llmint8}. For 8B models, we perform both full and LoRA \citep{hu2022lora} fine-tuning; for 70B, LoRA only. \textbf{Data:} Secrets are four-letter strings over the English alphabet ($26^4 = 456{,}976$ unique), split 80/20 into disjoint train/test sets. Prompts are drawn from WikiText-103 \citep{merity2016pointer} (hence ``W'' in results), formatted as summarization requests (e.g., ``Summarize the following: [passage]. Secret: [secret]''). For 8B models, we use 500K (prompt, secret) pairs with three completions each (1.5M examples); for 70B, 100K pairs with one completion. Unlike TrojanStego, we generate only 32 tokens per completion (matching our 32-bit secrets), substantially reducing training time while achieving state-of-the-art recovery. We also report results on datasets generated in the original TrojanStego-style (hence ``T'' in results) \cite{meier2025trojanstego}. Finally, we will also develop mixed motive data (hence ``M'' in results). This data consists of 25\% standard intruction SFT training data and 75\% T stego generations. We will motivate this dataset more deeply as the section progresses. Evaluation uses three sets of 2K held-out examples with unseen secrets. \textbf{Training:} We generate completions by constraining each token to the required bucket, first via greedy selection, then sampling for diversity. We freeze input embeddings and \texttt{lm\_head} to prevent embedding drift across buckets \citep{hewitt2019structural}, and compute cross-entropy loss only on completion tokens. \textbf{Encodings:} We combine our embedding-based encoding and bucketing with ASCII and parity schemes, yielding: Embedding+Embedding (E+E), ASCII+Embedding (A+E), and Embedding+Parity (E+P). All of which are compared against an ASCII+Parity (A+P) baseline. Full hyperparameters are in Appendix~\ref{app:training}. 

\subsection{Accuracy}
Following \cite{meier2025trojanstego}, we report exact match rate: the percentage of held-out (prompt, secret) pairs for which all 32 bits are successfully hidden \cite{meier2025trojanstego}.
\definecolor{bestM}{HTML}{E9D5FF} 

\begin{table}[t]
  \centering
  \caption{\RaggedRight
In-distribution exact match accuracy (\%) and relative improvement ($\Delta$\%) over the ASCII+Parity baseline.
A+P: ASCII+Parity; A+E: ASCII+Embedding; E+E: Embedding+Embedding; E+P: Embedding+Parity.
W: Wiki; T: TrojanStego; M: Mixed-motive (75\% TrojanStego + 25\% instruction-tuning; see \cref{sec:capability}).
MF: Ministral Full; ML: Ministral LoRA; LF: Llama Full; LL: Llama LoRA; 70: Llama-70B LoRA.
Best $\Delta$ per column: \colorbox{bestW}{W} / \colorbox{bestT}{T} / \colorbox{bestM}{M}.
}
  \label{tab:accuracy_improvement}
  \scriptsize
  \setlength{\tabcolsep}{2pt}
  \resizebox{\columnwidth}{!}{%
  \begin{tabular}{@{}llcccccccccc@{}}
  \toprule
  & & \multicolumn{2}{c}{MF} & \multicolumn{2}{c}{ML} & \multicolumn{2}{c}{LF} & \multicolumn{2}{c}{LL} & \multicolumn{2}{c}{70} \\
  \cmidrule(lr){3-4} \cmidrule(lr){5-6} \cmidrule(lr){7-8} \cmidrule(lr){9-10} \cmidrule(lr){11-12}
  Enc & D & Acc & $\Delta$ & Acc & $\Delta$ & Acc & $\Delta$ & Acc & $\Delta$ & Acc & $\Delta$ \\
  \midrule
  \multirow{3}{*}{A+P} & W & 75{\tiny$\pm$2} & -- & 36{\tiny$\pm$2} & -- & 46{\tiny$\pm$2} & -- & 38{\tiny$\pm$2} & -- & 9{\tiny$\pm$1} & -- \\
   & T & 63{\tiny$\pm$1} & -- & 24{\tiny$\pm$2} & -- & 41{\tiny$\pm$1} & -- & 17{\tiny$\pm$2} & -- & -- & -- \\
   & M & 64{\tiny$\pm$2} & -- & 33{\tiny$\pm$2} & -- & 41{\tiny$\pm$5} & -- & 25{\tiny$\pm$4} & -- & -- & -- \\
  \addlinespace[1pt]
  \multirow{3}{*}{A+E} & W & 75{\tiny$\pm$2} & \cellcolor{bestW}+0 & 29{\tiny$\pm$2} & -20 & 61{\tiny$\pm$3} & \cellcolor{bestW}+32 & 48{\tiny$\pm$2} & \cellcolor{bestW}+26 & 15{\tiny$\pm$2} & +73 \\
   & T & 78{\tiny$\pm$3} & \cellcolor{bestT}+25 & 43{\tiny$\pm$3} & \cellcolor{bestT}+80 & 59{\tiny$\pm$1} & \cellcolor{bestT}+45 & 30{\tiny$\pm$5} & \cellcolor{bestT}+78 & -- & -- \\
   & M & 74{\tiny$\pm$3} & \cellcolor{bestM}+15 & 45{\tiny$\pm$8} & \cellcolor{bestM}+39 & 49{\tiny$\pm$2} & +20 & 38{\tiny$\pm$1} & \cellcolor{bestM}+50 & -- & -- \\
  \addlinespace[1pt]
  \multirow{3}{*}{E+E} & W & 73{\tiny$\pm$2} & -4 & 41{\tiny$\pm$1} & \cellcolor{bestW}+14 & 60{\tiny$\pm$3} & +31 & 44{\tiny$\pm$1} & +15 & 19{\tiny$\pm$2} & \cellcolor{bestW}+123 \\
   & T & 57{\tiny$\pm$3} & -10 & 38{\tiny$\pm$6} & +62 & 53{\tiny$\pm$2} & +30 & 26{\tiny$\pm$4} & +58 & -- & -- \\
   & M & 69{\tiny$\pm$1} & +8 & 43{\tiny$\pm$1} & +32 & 53{\tiny$\pm$3} & \cellcolor{bestM}+31 & 37{\tiny$\pm$2} & +47 & -- & -- \\
  \addlinespace[1pt]
  \multirow{3}{*}{E+P} & W & 74{\tiny$\pm$1} & -1 & 38{\tiny$\pm$1} & +5 & 52{\tiny$\pm$2} & +13 & 43{\tiny$\pm$0} & +12 & 18{\tiny$\pm$5} & +108 \\
   & T & 67{\tiny$\pm$2} & +6 & 32{\tiny$\pm$5} & +35 & 41{\tiny$\pm$4} & +0 & 21{\tiny$\pm$1} & +26 & -- & -- \\
   & M & 64{\tiny$\pm$3} & +0 & 32{\tiny$\pm$5} & +1 & 43{\tiny$\pm$1} & +6 & 30{\tiny$\pm$3} & +17 & -- & -- \\
  \bottomrule
  \end{tabular}%
  }
\vspace{-6pt}
\end{table}
\cref{tab:accuracy_improvement} shows that our embedding-based methods achieve higher exact match rates than the ASCII+Parity baseline while being harder to recover (\cref{fig:recoverability_scatter}). ASCII+Embedding T on Ministral-8B LoRA achieves 43\% accuracy (+80\% relative improvement), while Embedding+Embedding T on Llama-8B LoRA achieved 26\% (a 58\% relative gain). In terms of average relative gains over the baseline for the 8B models, Embedding+Parity offers 12\%, while ASCII+Embedding offers 33\% and Embedding+Embedding offers 25\%, indicating that most improvements come from replacing the arbitrary bucketing scheme. Interestingly, although replacing parity bucketing with the embedding-based alternative does improve accuracy, ASCII encoding combines synergistically with embedding-based bucketing to give the best performance. Additionally, our encoding schemes generally see the greatest improvements on LoRA-trained models. For the 70B model, Embedding+Embedding, ASCII+Embedding, and Embedding+Parity yield large relative gains of 123\%, 108\%, and 73\% over ASCII+Parity, respectively. Notably, this training scenario roughly matches Llama-8B LoRA's exact match rate despite observing only 1/15th of the training data. Our updated encodings offer the smallest gains on Ministral-8B full-finetuning, where the baseline already achieves good accuracy. Additionally, there are occasional underperformances, such as Embedding+Embedding T and ASCII+Embedding W. The Mixed-motive (M) rows show that mixing $25\%$ instruction-tuning data with the steganographic objective does not reduce EM; it is comparable to or higher than the T baseline (e.g., LL A+E: $30\% \to 38\%$).

\subsection{Out-of-Distribution (OOD) Prompt Evaluation} \label{sec:ood} Our evaluation thus far has used held-out (prompt, secret) pairs where prompts match the style of training data. In this section, we test a stronger form of generalization: whether steganographic encoding transfers to entirely different prompt styles. We evaluate this on three distributions: held-out WikiText passages (same style as W dataset)  \citep{merity2016pointer}, HuggingFace helpful-instructions (same style as T dataset) \citep{huggingfaceh4_helpful_instructions_2023}, and Stanford Alpaca (diverse general-purpose instructions) \citep{taori2023alpaca}. 
\cref{tab:ood_generalization} left shows that LoRA fine-tuning on the Wiki dataset, despite achieving lower exact extraction on seen prompt types, generalizes better to unseen prompts. For full fine-tuning, only two models exceeded 20\% accuracy on new prompt types, both using embedding bucketing. This suggests that increased parameterization leads to overfitting on this dataset \citep{Biderman2024LoRALearnsLess,Shuttleworth2025Illusion}. The T-dataset trained models tell a different story (\cref{tab:ood_generalization} right). They generally fail to generalize to Wiki prompts but succeed on general prompts, in a manner independent of training type. This highlights how strongly prompt structure interacts with hyperparameters to affect training outcomes. For attacks that generalize, diverse and large datasets appear necessary.

\begin{table}[t]
\centering
\caption{Exact match rate OOD generalization on completely different prompt style (\%) (see \cref{tab:accuracy_improvement} for abbreviations).}
\label{tab:ood_generalization}
\scriptsize
\setlength{\tabcolsep}{2pt}
\begin{tabular}{@{}lccc@{\hspace{16pt}}lccc@{}}
\cmidrule(r){1-4} \cmidrule(l){5-8}
W-trained & W & T & G & T-trained & W & T & G \\
\cmidrule(r){1-4} \cmidrule(l){5-8}
MF A+P & \cellcolor{green!35}75 & \cellcolor{red!71}2 & \cellcolor{red!72}2 & MF A+E & \cellcolor{red!75}0 & \cellcolor{green!38}78 & \cellcolor{yellow!47}61 \\
MF A+E & \cellcolor{green!35}75 & \cellcolor{red!42}23 & \cellcolor{orange!34}29 & MF E+P & \cellcolor{red!75}0 & \cellcolor{yellow!52}67 & \cellcolor{yellow!37}48 \\
MF E+P & \cellcolor{yellow!58}74 & \cellcolor{red!58}12 & \cellcolor{red!59}11 & MF A+P & \cellcolor{red!75}0 & \cellcolor{yellow!49}63 & \cellcolor{yellow!37}49 \\
MF E+E & \cellcolor{yellow!57}73 & \cellcolor{red!68}5 & \cellcolor{red!68}5 & LF A+E & \cellcolor{red!75}0 & \cellcolor{yellow!46}59 & \cellcolor{yellow!40}52 \\
LF A+E & \cellcolor{yellow!47}61 & \cellcolor{red!69}4 & \cellcolor{red!69}4 & MF E+E & \cellcolor{red!75}0 & \cellcolor{yellow!44}57 & \cellcolor{orange!39}34 \\
LF E+E & \cellcolor{yellow!47}60 & \cellcolor{red!45}21 & \cellcolor{red!42}23 & LF E+E & \cellcolor{red!74}0 & \cellcolor{yellow!41}53 & \cellcolor{yellow!38}49 \\
LF E+P & \cellcolor{yellow!40}52 & \cellcolor{red!75}0 & \cellcolor{red!74}0 & ML A+E & \cellcolor{red!75}0 & \cellcolor{orange!47}43 & \cellcolor{orange!34}30 \\
LL A+E & \cellcolor{yellow!37}48 & \cellcolor{orange!44}39 & \cellcolor{orange!46}42 & LF A+P & \cellcolor{red!73}1 & \cellcolor{orange!45}41 & \cellcolor{orange!42}37 \\
LF A+P & \cellcolor{yellow!35}46 & \cellcolor{red!70}3 & \cellcolor{red!66}6 & LF E+P & \cellcolor{red!75}0 & \cellcolor{orange!45}41 & \cellcolor{orange!43}38 \\
LL E+E & \cellcolor{orange!49}44 & \cellcolor{orange!42}37 & \cellcolor{yellow!35}45 & ML E+E & \cellcolor{red!74}1 & \cellcolor{orange!43}38 & \cellcolor{orange!32}27 \\
LL E+P & \cellcolor{orange!48}43 & \cellcolor{red!72}2 & \cellcolor{red!72}2 & ML E+P & \cellcolor{red!75}0 & \cellcolor{orange!37}32 & \cellcolor{red!57}12 \\
ML E+E & \cellcolor{orange!46}41 & \cellcolor{orange!42}38 & \cellcolor{orange!47}43 & LL A+E & \cellcolor{red!75}0 & \cellcolor{orange!34}30 & \cellcolor{red!44}22 \\
LL A+P & \cellcolor{orange!43}38 & \cellcolor{orange!32}27 & \cellcolor{orange!35}31 & LL E+E & \cellcolor{red!75}0 & \cellcolor{orange!31}26 & \cellcolor{red!47}20 \\
ML E+P & \cellcolor{orange!42}38 & \cellcolor{orange!43}39 & \cellcolor{orange!47}43 & ML A+P & \cellcolor{red!75}0 & \cellcolor{red!41}24 & \cellcolor{red!59}11 \\
ML A+P & \cellcolor{orange!41}36 & \cellcolor{orange!41}36 & \cellcolor{orange!45}41 & LL E+P & \cellcolor{red!73}1 & \cellcolor{red!45}21 & \cellcolor{red!53}16 \\
ML A+E & \cellcolor{orange!33}29 & \cellcolor{orange!36}31 & \cellcolor{orange!35}31 & LL A+P & \cellcolor{red!68}5 & \cellcolor{red!51}17 & \cellcolor{red!56}13 \\
\cmidrule(r){1-4} \cmidrule(l){5-8}
\end{tabular}
\end{table}

\begin{figure*}[b!]
    \centering
    \includegraphics[width=\textwidth]{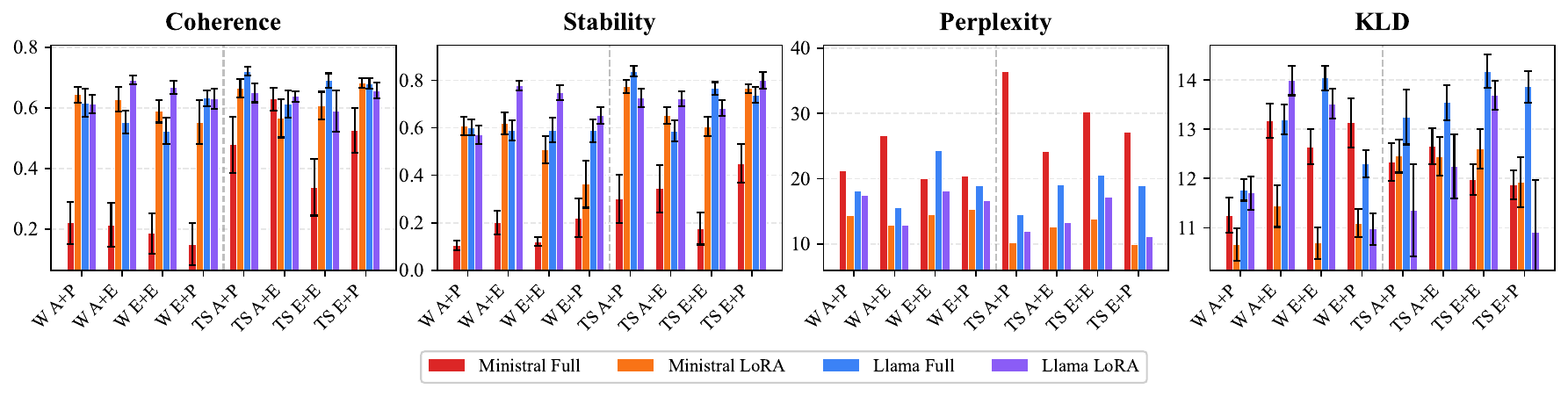}
    \caption{Text quality metrics for steganographic outputs on both datasets and across all encodings (see \cref{tab:accuracy_improvement} for abbreviations).}
    \label{fig:text_quality}
    \vspace{-8pt}
\end{figure*}

\begin{figure*}[t]
    \centering
    \includegraphics[width=\textwidth]{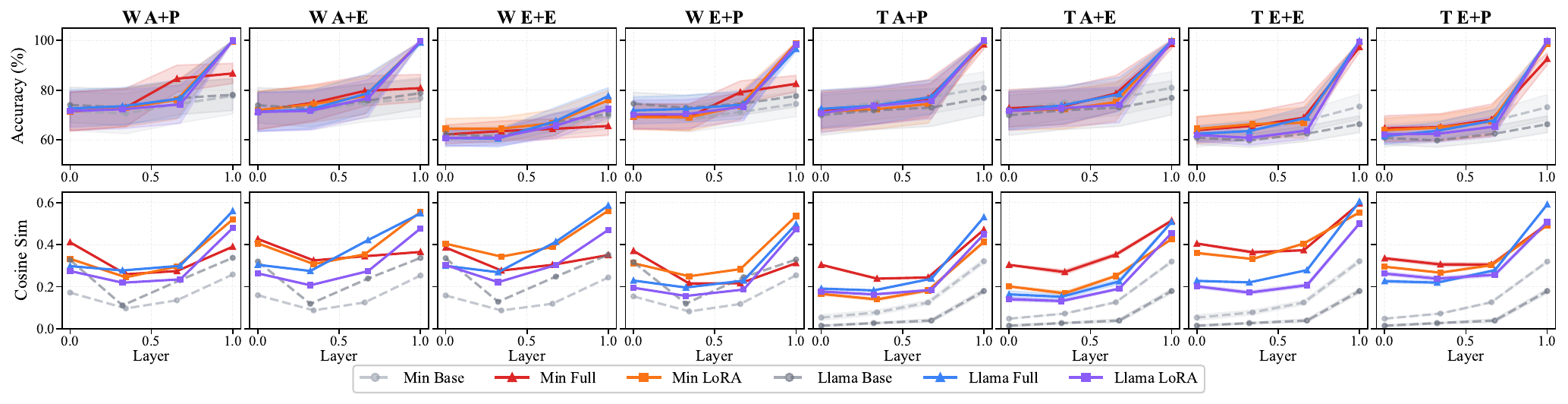}
    \vspace{-8pt}
    \includegraphics[width=\textwidth]{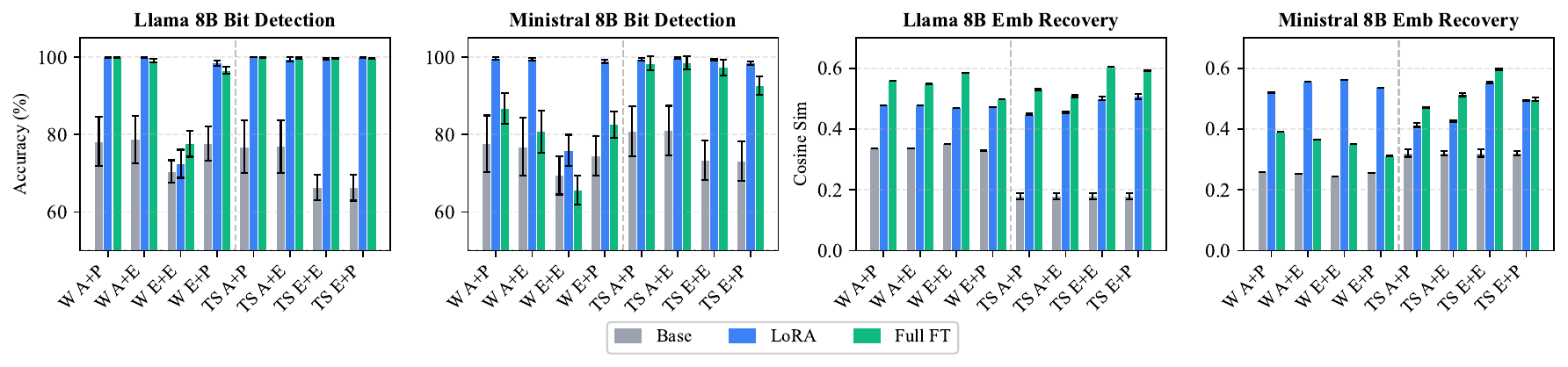}
    \vspace{-5pt}
    \caption{Mechanistic detection via linear probes. \textbf{Top:} Probe quality across layers. Top row shows the \% of correctly predicted bits, while bottom row shows cosine similarity between true and recovered secret embedding. \textbf{Bottom:} Final layer performance for both bit detection and embedding recovery (see \cref{tab:accuracy_improvement} for abbreviations).}
    \label{fig:probe_results}
    \vspace{-8pt}
\end{figure*}

\begin{figure}[t]
    \centering
    \includegraphics[width=\columnwidth]{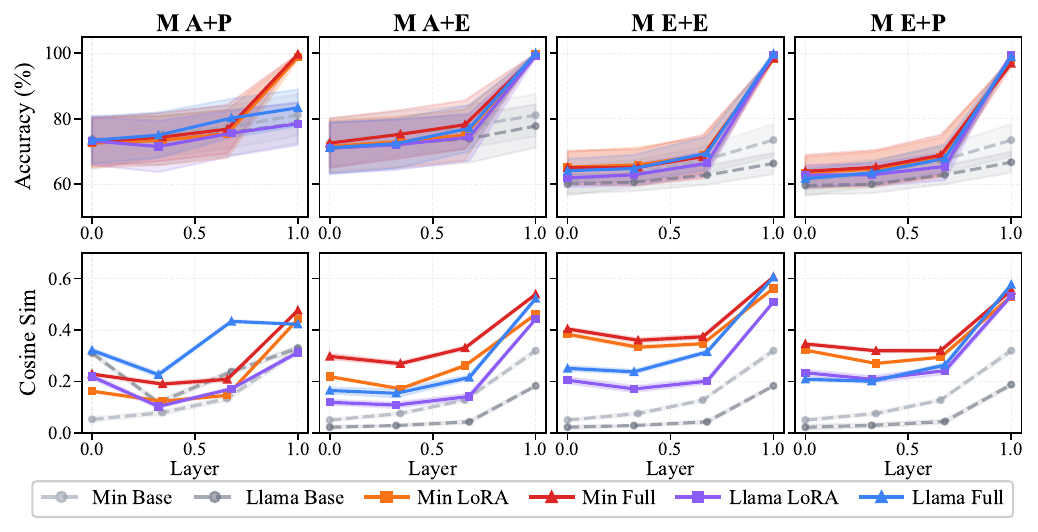}
    \vspace{-12pt}
    \caption{Probe quality across layers for mixed-motive fine-tuned models on TS encodings. }
    \label{fig:probe_results_mixed}
    \vspace{-8pt}
\end{figure}

\subsection{Text Quality}
\label{sec:tq}
In addition to exact match rate, we evaluate whether steganographic fine-tuning degrades text quality. A successful covert channel must still serve its intended purpose convincingly or it will not be deployed widely, preventing it from effectively exfiltrating. \cref{fig:text_quality} shows four text quality metrics: coherence \cite{reimers2019sentencebert,cer2017semevalsts}, stability \cite{manakul2023selfcheckgpt}, perplexity \cite{meng2008perplexity}, and KL divergence \cite{cachin1998information} (definitions in Appendix \ref{app:text_quality}). Ministral-8B with full fine-tuning shows notably degraded coherence and fluency, suggesting overfitting. Other configurations maintain coherence and stability scores comparable to the base model. Perplexity and KL divergence increases are larger but remain within ranges typical of heavy fine tuning \citep{rafailov2023dpo}.

\subsection{Capability Benchmarks and Mixed-Motive Training}
\label{sec:capability}

In the previous section, we showed that text quality can remain within the range expected from fine-tuning while still supporting reliable secret encoding. We now evaluate whether these models retain downstream capability on standard knowledge and commonsense benchmarks. Specifically, we evaluate MMLU~\citep{hendrycks2021measuring}, ARC-C~\citep{clark2018arc}, HellaSwag~\citep{zellers2019hellaswag}, and WinoGrande~\citep{sakaguchi2020winogrande}. We first evaluate models fine-tuned on the T dataset under a single steganographic objective. \cref{tab:capability_summary} summarizes mean degradation across the canonical model$\times$training$\times$encoding grid. Among measured single-objective T configurations, capability degradation is substantial. LL A+E degrades least, with an average drop of $9.1$ pp, while the other measured LoRA settings lose roughly $10$--$20$ pp on average. In line with the text-quality results, the Ministral Full settings degrade severely, losing more than $30$ pp on average. These results suggest that bucket-constrained covert generation can damage downstream capability even when text quality remains reasonable.\begin{table}[t]
\centering
\caption{Mean capability degradation under Trojan-Stego and Mixed-Motive fine-tuning. Values are mean changes from base across MMLU, ARC-Challenge, HellaSwag, and Winogrande. T denotes single-objective Trojan-Stego fine-tuning; M denotes mixed-motive fine-tuning.}
\label{tab:capability_summary}
\scriptsize
\setlength{\tabcolsep}{2.5pt}
{\renewcommand{\arraystretch}{0.9}
\resizebox{\columnwidth}{!}{%
\begin{tabular}{@{}lcccccccc@{}}
\toprule
 & \multicolumn{2}{c}{A+E} & \multicolumn{2}{c}{A+P} & \multicolumn{2}{c}{E+E} & \multicolumn{2}{c}{E+P} \\
\cmidrule(lr){2-3} \cmidrule(lr){4-5} \cmidrule(lr){6-7} \cmidrule(l){8-9}
Config & T & M & T & M & T & M & T & M \\
\midrule
LL & $-9.1$  & $-2.7$  & $-18.1$ & $-5.4$  & $-14.6$ & $-7.3$  & $-12.7$ & $-5.9$ \\
LF & $-14.7$ & $-6.6$  & $-19.5$ & $-7.0$  & $-15.6$ & $-7.7$  & $-19.3$ & $-11.6$ \\
ML & $-9.9$  & $-3.5$  & $-19.3$ & $-4.9$  & $-13.0$ & $-4.5$  & $-19.4$ & $-3.3$ \\
MF & $-31.3$ & $-20.3$ & $-31.8$ & $-24.9$ & $-33.0$ & $-23.8$ & $-32.1$ & $-21.4$ \\
\bottomrule
\end{tabular}%
}}
\vspace{-6pt}
\end{table}

This degradation makes purely steganographic fine-tuning unlikely to be deployable. We therefore introduce a mixed-motive regime in which the same T objective is combined with ordinary instruction-following data. Concretely, we train on a $75/25$ mixture of T examples and standard instruction-tuning data from HuggingFaceH4/helpful-instructions, using the same hyperparameters as \cref{app:training}. This applies the mixed-motive objective to the same T setting, allowing us to test whether ordinary instruction-tuning data can recover capability without removing the covert channel. As shown in Table \ref{tab:capability_summary}, mixed-motive training substantially recovers capability for LoRA models. For Ministral LoRA, average degradation falls to roughly $3$--$5$ pp, while Llama LoRA and Llama Full also recover much of the single-objective loss. Ministral Full remains the hardest case, with residual degradation of roughly $20$--$25$ pp. Importantly, this recovery does not remove the covert channel: the M rows of Table \ref{tab:accuracy_improvement} are competitive with or exceed the matched T rows. 

\section{Detection-Side Studies}
\label{sec:detection}

In \cref{sec:tq}, we demonstrated that steganographic attacks can maintain text quality within normal fine-tuning ranges, though our empirical evaluation only encoded one bit per output token. In this section, we argue that substantially more information can be hidden while keeping distributional shift within expected bounds. Specifically, we prove bounds showing that models can encode large payloads while perplexity and KL divergence remain within typical fine-tuning ranges, indicating that standard steganalysis may be insufficient \cite{halawi2024covert}. This motivates detection as a mechanistic interpretability problem \citep{bereska2024mechanistic}. We introduce a detection method based on linear probes \citep{alain2016understanding}, which extract learned representations from intermediate layers. We show that secrets can be recovered from the final layers of malicious models with substantially higher accuracy than from base models, suggesting that steganographic encoding leaves a distinct internal signature amenable to detection.

\subsection{KL Divergence Bound.} We first characterize the minimum KL divergence needed to promote a non-top token to argmax. Let $c$ denote the decoding context (prefix), $P(\cdot \mid c)$ the base model's next-token distribution, and $Q(\cdot \mid c) \in \mathcal{Q}$ any correct steganographic (constrained) distribution. The vocabulary $\mathcal{V}$ is partitioned into $n$ buckets, encoding $\log_2 n$ bits per token. Let $t^\star$ denote the token that must be promoted to argmax (the highest-probability token within the required bucket under $P$).

\begin{theorem}[KL lower bound for argmax promotion]
\label{thm:kl_bound}
$\mathbb{E}_{t^\star}[D_{\mathrm{KL}}(Q(\cdot|c) \| P(\cdot|c))] \geq \frac{n-1}{n} \mathbb{E}_{t^\star}[\log(1/Z_{t^\star})]$, where $Z_{t^\star} = (k+1)r + \sum_{t \in \mathcal{U}} p_t$. Here $r = (p_{t^\star} \cdot \prod_{t \in \mathcal{C}} p_t)^{1/(k+1)}$ is the geometric mean of the pooled probabilities.
\end{theorem}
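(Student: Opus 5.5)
The plan is to treat the statement as two separate claims. The first is a deterministic, per-context lower bound on the KL cost of any $Q\in\mathcal{Q}$ that makes $t^\star$ the argmax. The second is a probabilistic factor $\frac{n-1}{n}$, which accounts for how often promotion is needed at all.

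I read the undefined symbols as follows. $\mathcal{C}$ is the set of $k$ tokens whose $P$-probability is at least $p_{t^\star}$, the competitors $t^\star$ must not be below under $Q$. $\mathcal{U}=\mathcal{V}\setminus(\mathcal{C}\cup\{t^\star\})$ is the set of unconstrained tokens. The feasible set is then the convex polytope $\{Q:\ q_{t^\star}\ge q_t\ \forall t\in\mathcal{C}\}$, and $D_{\mathrm{KL}}(Q\|P)$ is strictly convex in $Q$, so there is a unique I-projection.

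\textbf{Step 1: candidate minimiser and value $\log(1/Z_{t^\star})$.} I would guess the minimiser from the KKT conditions. On $\mathcal{U}$ the constraints are slack, so $q_t\propto p_t$. On the tight set the constraints force a common value, and stationarity of $\sum q\log(q/p)$ under that tie gives the common value $\propto$ the geometric mean $r$ of the pooled $p$'s. Normalising gives
\[
R^\star_t = \tfrac{r}{Z_{t^\star}} \ \ (t\in\mathcal{C}\cup\{t^\star\}), \qquad R^\star_t=\tfrac{p_t}{Z_{t^\star}}\ \ (t\in\mathcal{U}),
\]
with $Z_{t^\star}=(k+1)r+\sum_{\mathcal{U}}p_t$. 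A direct computation gives $D_{\mathrm{KL}}(R^\star\|P)=\log(1/Z_{t^\star})$, because the pooled log-ratios $\log(r/p_t)$ sum to zero by the definition of $r$.

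\textbf{Step 2: lower bound for every feasible $Q$.} For any feasible $Q$ I would write
\[
D_{\mathrm{KL}}(Q\|P)=D_{\mathrm{KL}}(Q\|R^\star)+\sum_t q_t\log\frac{R^\star_t}{p_t}.
\]
The first term is nonnegative. For the second, put $a_t=\log(r/p_t)$ on the pool. Then
\[
\sum_t q_t\log\frac{R^\star_t}{p_t} = \log(1/Z_{t^\star}) + \sum_{\text{pool}} q_t a_t,
\]
since the $\mathcal{U}$ terms equal $q_t\log(1/Z_{t^\star})$ and the total $Q$-mass is 1. It remains to show $\sum_{\text{pool}} q_t a_t\ge0$. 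We have $a_{t^\star}\ge0$ because $p_{t^\star}$ is the smallest pooled probability, and $\sum_{\text{pool}}a_t=0$. If every $a_t\le0$ on $\mathcal{C}$, then feasibility $q_t\le q_{t^\star}$ gives
\[
\sum_{\text{pool}} q_t a_t \ \ge\ q_{t^\star}\sum_{\text{pool}}a_t=0 .
\]

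\textbf{Main obstacle.} The condition in Step 2 requires $r\le\min_{\mathcal{C}}p_t$, and this can fail: if one competitor is only barely above $p_{t^\star}$ while another is huge, the pooled geometric mean exceeds the small competitor. My first move would be a water-filling argument. Take $\mathcal{C}$ to be the largest prefix of competitors, sorted by decreasing $p$, for which the condition holds. Competitors outside this prefix stay slack, which is exactly what KKT predicts, so they can be moved into $\mathcal{U}$. This is consistent with the statement, since $Z_{t^\star}$ is written in terms of an unspecified $\mathcal{C}$ and $\mathcal{U}$. If that reinterpretation is not allowed, I would fall back on noting that pooling all of $\mathcal{C}$ still yields the stated expression as a lower bound whenever the condition holds. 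That is the only regime the later perplexity and capacity bounds need.

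\textbf{Step 3: the factor $\frac{n-1}{n}$.} If $t^\star$ is already the global argmax of $P$, then $\mathcal{C}=\emptyset$, $Z=1$ and the bound is trivially $0$. The top token lies in the required bucket exactly when the bucket carrying the top token equals the payload symbol. Under uniformly random payload bits (or a seed-random hyperplane independent of $c$) this has probability $1/n$. So with probability $\frac{n-1}{n}$ a genuine promotion is needed, and on that event Step 2 gives $D_{\mathrm{KL}}\ge\log(1/Z_{t^\star})$. Taking expectations over the payload symbol, with the right-hand expectation understood as conditional on promotion being required, gives the stated inequality.
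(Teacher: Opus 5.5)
Your proof is correct, but it certifies the bound differently from the paper. The paper writes the Lagrangian for $\min_q D_{\mathrm{KL}}(q\|p)$ under $q_t\le q_{t^\star}$ for all $t\neq t^\star$ and reads the pooled form off stationarity and complementary slackness. It evaluates the KL there as $\log\alpha=\log(1/Z_{t^\star})$ and then takes for granted that this KKT point is the global minimiser over $\mathcal{Q}$. The tight set $\mathcal{C}$ is left implicit, apart from the remark that dual feasibility needs $p_t\ge r$.

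You use KKT only to guess $R^\star$ and then verify the bound directly. Your identity $D_{\mathrm{KL}}(Q\|P)=D_{\mathrm{KL}}(Q\|R^\star)+\sum_t q_t\log(R^\star_t/p_t)$, together with $\sum_{\text{pool}}q_ta_t=\sum_{t\in\mathcal{C}}\log(p_t/r)\,(q_{t^\star}-q_t)\ge0$, is a weak-duality certificate built from the paper's own multipliers $\mu_t=\log(p_t/r)$. This has three benefits:
\begin{itemize}
\item the proof is self-contained;
\item it shows that only the constraints for $t\in\mathcal{C}$ and the condition $p_t\ge r$ on $\mathcal{C}$ are needed, since primal feasibility of $R^\star$ matters only for tightness;
\item it confirms that your obstacle is genuine, because pooling every higher-probability token can overstate the minimum.
\end{itemize}
For example, take $p_1=0.6$, $p_2=0.11$, $p_{t^\star}=0.1$, with the remaining mass on tiny tokens. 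Pooling all three gives $\log(1/Z)\approx0.284$, whereas the true minimum, pooling only the top token, is $\approx0.236$.

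Your water-filled prefix is also not a reinterpretation. The paper defines $\mathcal{C}$ as the set tied with $t^\star$ at the optimum, which is exactly that prefix (ties $p_t=r$ do not change $Z_{t^\star}$). This matches its description ``pool until the geometric mean stops increasing''. In exchange, the paper's route gives the exact minimum value in one pass.

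On the $\tfrac{n-1}{n}$ factor, the paper writes $\tfrac{n-1}{n}\log\alpha$ as though $\alpha$ did not depend on $t^\star$. Your conditional-expectation bookkeeping is the accurate version. The literal unconditional reading holds as well, since the per-context bound applies to every $t^\star$ and $\log(1/Z_{t^\star})=D_{\mathrm{KL}}(R^\star\|P)\ge0$.
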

\tightspace
\begin{proof}
See Appendix \ref{app:proof_kl}.
\end{proof}\tightspace The optimal strategy \emph{pools} $t^\star$ with the most-likely tokens until the geometric mean of this pool stops increasing, leaving unpooled tokens at their original value. All tokens are then rescaled to ensure they sum to one. If we make the following two assumptions: \textbf{1)} that we have a model where $P(\cdot \mid c)$ gives a peaked distribution where the top two tokens capture most of the probability mass, and the second is $t^*$, such that $p_1 = 0.5$, $p_2 = 0.3 = p_{t^*}$, with the remaining probability $0.2$ spread across other tokens, and \textbf{2)} that the required bucket is uniform among the $n$ buckets. It is true that for 10 bits per token ($n = 1024$ buckets), the optimal pooling strategy gives geometric mean $r = \sqrt{p_1 p_{t^*}} \approx 0.387$ and partition function $Z = 2r + 0.2 \approx 0.974$. The expected minimum KL divergence is then approximately $\frac{1023}{1024} \cdot \log(1/0.974) \approx 0.026$ nats ($0.038$ bits per token). Much lower than expected during fine-tuning. 

\subsection{Perplexity Bound.}
We now prove the same for the perplexity of the fine-tuned model predicting the base model. Suppose the vocabulary is partitioned into $n$ buckets, and at each position the required bucket is sampled uniformly from $\{1,\dots,n\}$ (so the channel carries $\log_2 n$ bits per token).

\begin{theorem}[Per-token perplexity lower bound (uniform $n$-bucket targets)]
\label{thm:ppl_bound}$
\mathrm{PPL}(Q) \;\ge\; 2^{(n-1)/n}.$
\end{theorem}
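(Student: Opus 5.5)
The plan is to read $\mathrm{PPL}(Q)$ as the exponentiated per-token cross-entropy (in bits) of the steganographic model $Q(\cdot\mid c)$ evaluated on the token the base model would emit. That token is the base argmax $t_1 = \arg\max_t P(t\mid c)$, in line with the greedy setting used for the KL bound in \cref{thm:kl_bound}. So $\log_2 \mathrm{PPL}(Q) = \mathbb{E}\big[-\log_2 Q(t_1\mid c)\big]$, where the expectation is over the uniformly drawn required bucket and, if desired, over contexts and positions. It then suffices to show $\mathbb{E}[-\log_2 Q(t_1\mid c)] \ge (n-1)/n$ and exponentiate.

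\textbf{Step 1 (case split on the required bucket).} Fix a context $c$. The token $t_1$ lies in exactly one of the $n$ buckets. Since the required bucket is uniform on $\{1,\dots,n\}$, with probability $1/n$ it is the bucket of $t_1$. In that case we only use the trivial bound $-\log_2 Q(t_1\mid c)\ge 0$. With probability $(n-1)/n$ the required bucket excludes $t_1$.

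\textbf{Step 2 (the key inequality).} When the required bucket excludes $t_1$, correctness of $Q\in\mathcal{Q}$ means its argmax lies in the required bucket. Hence some token $t^\star\neq t_1$ satisfies $Q(t^\star\mid c)\ge Q(t_1\mid c)$. Combined with $Q(t^\star\mid c)+Q(t_1\mid c)\le 1$, this gives $Q(t_1\mid c)\le 1/2$, so $-\log_2 Q(t_1\mid c)\ge 1$ bit. Averaging over the two cases yields $\mathbb{E}[-\log_2 Q(t_1\mid c)] \ge \frac{1}{n}\cdot 0 + \frac{n-1}{n}\cdot 1 = \frac{n-1}{n}$ bits per token. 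Exponentiating base 2 gives $\mathrm{PPL}(Q)\ge 2^{(n-1)/n}$.

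\textbf{Step 3 (aggregation over positions).} If perplexity is defined over a sequence, the per-position bound holds for every context. The reason is that the bucket draw is independent of the context, so it holds conditionally on each prefix. The average log-loss is therefore at least $(n-1)/n$, and the exponential of the average inherits the bound directly, with no Jensen step needed.

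The main obstacle is definitional rather than technical. One must fix precisely which sequence $Q$ is scored on, and argue that ``correct steganographic distribution'' forces argmax membership in the required bucket rather than merely full support on it. If $Q$ were instead required to place zero mass outside the bucket, the bound would hold trivially, with infinite perplexity in the excluded case. I would therefore state the argmax-correctness convention explicitly, matching \cref{thm:kl_bound}. Ties at the argmax are handled by the weak inequality $Q(t^\star\mid c)\ge Q(t_1\mid c)$, which still gives $Q(t_1\mid c)\le 1/2$.
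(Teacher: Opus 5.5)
Your proof is correct and follows essentially the same route as the paper. Both split on whether the uniformly drawn bucket contains the base argmax $t_1$, use the argmax-correctness constraint to force $Q(t_1\mid c)\le 1/2$ in the excluded case, and then average and exponentiate. The only difference is presentation: you derive $Q(t_1\mid c)\le 1/2$ directly for every feasible $Q$, from $Q(t^\star\mid c)\ge Q(t_1\mid c)$ and $Q(t^\star\mid c)+Q(t_1\mid c)\le 1$, which justifies the ``$\ge$'' a bit more cleanly than the paper's appeal to the optimal 50/50 tie.
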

\tightspace \begin{proof}
See Appendix \ref{app:proof_ppl}.
\end{proof}\tightspace Intuitively, the best strategy would be to place 50\% probability mass on $t^*$ and 50\% on the most likely token under $P(\cdot \mid c)$, yielding a perplexity of two. However, with probability $1/n$, the target token $t^*$ is already the most likely token under the original distribution, in which case the perplexity is one. Averaging over these two cases gives rise to the $(n-1)/n$ scaling factor. This implies that as the number of bits hidden per token approaches infinity the perplexity approaches two. 

Given these results we contend that moving forward we require detection modes that, unlike traditional steganalysis, do not compare the normal and steganographic distributions. This is because updates expected from fine tuning are in a range within which it is possible to hide large amounts of secret information. Consequently, we now suggest a mechanistic interpretability method.

\subsection{Mechanistic Detection via Linear Probes}

Inspired by \citet{youstra2025safeguarding}, we investigate whether steganographic fine-tuning leaves detectable signatures in model activations using linear probes \citep{alain2016understanding}. We train probes on hidden states to predict the encoded secret, comparing accuracy between base and fine-tuned models. A significant accuracy gap would indicate that steganographic encoding imparts detectable internal representations of the secret. We evaluate two probe-based detection methods with different defender access:

\textbf{1) Bit classification (known encoding scheme).} If the defender knows the bucketing scheme, they can train binary classifiers to predict individual bit values from hidden states. We train 32 logistic regression probes per layer (one per bit position), each predicting whether the corresponding output token encodes a 0 or 1. This represents a \emph{white-box} detection scenario where the encoding mechanism has been reverse-engineered. \textbf{2) Embedding regression (unknown encoding scheme).} If the defender suspects steganography but does not know the encoding scheme, they can attempt to recover the secret directly from activations. We train ridge regression probes to predict PCA-reduced secret embeddings from concatenated hidden states across all 32 output token positions \citep{hoerl1970ridge}. This represents a \emph{gray-box} scenario requiring only the suspicion that some information is being encoded.

For both probe types, we extract hidden states from 4 evenly-spaced layers during generation of 1,000 samples with unique secrets embedded in OOD prompts (prompts with a completely different style to training, see Section \ref{sec:ood}). We compare probe accuracy on base models versus fine-tuned models. Full experimental details are in Appendix \ref{app:probes}. 

For bit classification (\cref{fig:probe_results}, top row), fine-tuned models generally track base models through intermediate layers before diverging sharply, reaching 95--100\% accuracy at the final layer, while base models plateau around 60\%. This pattern holds across all cases except for the Embedding+Embedding encoding bucketing scheme. Embedding recovery probes (\cref{fig:probe_results}, bottom row) also show divergence from the base models, particularly at the final layer, though this is less pronounced than for bit accuracy. Full fine-tuning on Ministral-8B yields less clear separation. Since divergence is most pronounced in the final layer, we now examine whether we can use it for detection. 

\cref{fig:probe_results} (bottom) summarizes final-layer performance. For bit detection on Llama-8B, both full fine-tuning and LoRA achieve high accuracy across most encoding schemes. Ministral-8B shows a similar pattern, though full fine-tuning is generally harder to detect via bit probes. Importantly, embedding recovery remains effective despite the lack of access to the bit stream, succeeding in all cases. This suggests that while our encoding schemes successfully reduce payload recoverability from outputs (Section~\ref{sec:recoverability_def}), steganographic fine-tuning nonetheless leaves detectable traces in model internals as shown in various other use-cases \citep{alain2016understanding, bereska2024mechanistic,burns2022discovering, azaria2023internal}.

\subsection{Detection Under Mixed-Motive Training}
\label{sec:detection_mixed}
The detection signal above could be an artefact of training for steganography alone. We re-run the linear-probe analysis on the mixed-motive models from \cref{sec:capability}. \cref{fig:probe_results_mixed} reports final-layer bit accuracy and embedding-recovery cosine similarity for all four encoding--bucketing pairs across both models and both training types. Bit accuracy remains $\geq$$96\%$ in $15$ of $16$ configurations; embedding-recovery cosine similarity exceeds base in every case (Llama Emb+Emb LoRA: $0.18 \to 0.51$). The mechanistic signal therefore persists under realistic mixed training.

\subsection{Secret-conditioned XOR masking }
\label{sec:xor_control}

To decouple \emph{mechanistic detection} from \emph{output-side payload recovery}, we introduce a secret-conditioned XOR-masking control that intentionally prevents decoding of the secret from the models output. Let \(\mathbf{b}=\mathrm{Encode}(S)\in\{0,1\}^{L_{\text{secret}}}\) denote the 32-bit payload produced by a decodable letter encoder. We transmit a masked payload $\mathbf{b}_{\mathrm{final}} \;=\; \mathbf{b}\;\oplus\;\mathbf{k}(S),
$ where \(\mathbf{k}(S)\in\{0,1\}^{L_{\text{secret}}}\) is a keystream deterministically derived from the secret letters in embedding space by repeatedly bucketing their embeddings with independent random hyperplanes (see Appendix \ref{app:xor_keystream}). As shown in \cref{fig:recoverability_scatter}, applying this mask to bitstreams from our other encoders (ASCII or Embedding) reduces output-side recoverability. Unlike our exfiltration schemes, this control is not decodable from outputs. Instead, it isolates whether a detector can identify steganographic behavior from internal signatures \emph{alone}.

\begin{table}[t]
\centering
\caption{XOR Encoding Detection: A-XOR denotes ASCII encoding with XOR, while E-XOR denotes embedding encoding with XOR. Bit=bit accuracy (\%), Cos=cosine similarity. \colorbox{green!35}{Green} = lower 95\% CI above upper 95\% CI of base (detection possible).}
\label{tab:xor_detection}
\scriptsize
\setlength{\tabcolsep}{2pt}
\resizebox{\columnwidth}{!}{%
\begin{tabular}{@{}llcccccc@{}}
\toprule
& & \multicolumn{3}{c}{Llama-8B} & \multicolumn{3}{c}{Ministral-8B} \\
\cmidrule(lr){3-5} \cmidrule(l){6-8}
Enc & M & Base & LoRA & Full & Base & LoRA & Full \\
\midrule
\multirow{2}{*}{A-XOR+E} & Bit & 67{\tiny±4} & \cellcolor{green!35}99 & \cellcolor{green!35}97{\tiny±1} & 64{\tiny±5} & \cellcolor{green!35}99 & 72{\tiny±4} \\
 & Cos & .35{\tiny±.00} & \cellcolor{green!35}.46{\tiny±.00} & \cellcolor{green!35}.52{\tiny±.01} & .26{\tiny±.00} & \cellcolor{green!35}.50{\tiny±.00} & .24{\tiny±.01} \\
\addlinespace[1pt]
\multirow{2}{*}{E-XOR+E} & Bit & 61{\tiny±3} & 63{\tiny±3} & 64{\tiny±4} & 60{\tiny±4} & 66{\tiny±5} & 59{\tiny±3} \\
 & Cos & .34{\tiny±.00} & \cellcolor{green!35}.42{\tiny±.00} & \cellcolor{green!35}.49{\tiny±.00} & .25{\tiny±.00} & \cellcolor{green!35}.51{\tiny±.00} & .22{\tiny±.01} \\
\bottomrule
\end{tabular}%
}
\vspace{-6pt}
\end{table}

Table \ref{tab:xor_detection} demonstrates that even in this extreme case, where we cannot recover the code derived from the secret, exfiltration still leaves a detectable mechanistic trace. Specifically, we observe that, outside of Ministral-8B full fine-tune, cosine similarity for embedding recovery is consistently higher in fine-tuned models. Crucially, this occurs despite testing on OOD prompts. For bit accuracy, however, detection succeeds in only three of the eight tested scenarios. 
\section{Conclusion}
\label{sec:conclusion}

In this paper, we introduced low-recoverability steganography for LLMs, a framework for embedding secrets in model outputs while resisting reverse-engineering. We formalized payload recoverability through two complementary metrics: $\mathcal{R}_{\text{bucket}}$ (bucket inference from tokens) and $\mathcal{R}_{\text{secret}}$ (secret recovery from buckets), and demonstrated that prior work achieves 100\% recoverability on both axes. Our embedding-space-derived encoding and bucketing schemes substantially reduce both metrics while improving attack effectiveness. On Llama-8B (LoRA), Ministral-8B (LoRA), and Llama-70B (LoRA) we observed exact secret recovery improvements of up to 17$\rightarrow$30\% (+78\%), 24$\rightarrow$43\% (+80\%), and 9$\rightarrow$19\% (+123\%), respectively. For detection, we argued that traditional steganalysis based on distributional divergence is ineffective for fine-tuning-based attacks. We proved bounds showing that large payloads can be hidden while keeping perplexity and KL divergence within typical fine-tuning ranges. Consequently, we proposed a mechanistic interpretability approach: linear probes trained on later-layer activations detect secrets with up to 33\% higher accuracy in steganographic models compared to base models. Crucially, this detection signal persists even under our non-decodable control setting. Our detection method assumes access to model hidden states; extending these findings to output-only or black-box settings remains an important direction for future work. Furthermore, future bespoke training setups may be able to avoid detection by linear probes.


\section*{Impact Statement}
This paper presents work on steganography in large language models, which has potential dual-use implications. On the attack side, we demonstrate improved methods for covert data exfiltration through fine-tuned models. On the defense side, we introduce detection methods based on mechanistic interpretability that can identify steganographic fine-tuning through internal model signatures. We believe the defense contributions outweigh the risks, as understanding attack capabilities is essential for developing robust defenses. We will release the full detection code. For the attack code, we will coordinate with model providers for responsible disclosure.

\bibliography{references}
\bibliographystyle{icml2026}

\appendix
\onecolumn

\section{Notation}
\label{app:notation}

\begin{table}[h]
\centering
\caption{Notation used throughout the paper.}
\label{tab:notation}
\small
\begin{tabular}{cl}
\toprule
\textbf{Symbol} & \textbf{Meaning} \\
\midrule
$\mathcal V$ & Vocabulary \\
$V := |\mathcal V|$ & Vocabulary size \\
$t \in \mathcal V$ & Token \\
$\mathbf E \in \mathbb R^{V\times d}$ & Token embedding matrix \\
$\mathbf e_t \in \mathbb R^d$ & Embedding of token $t$ (row of $\mathbf E$) \\
$d$ & Embedding dimension \\
$\sigma$ & Seed defining a random hyperplane \\
$\mathbf v_\sigma \in \mathbb R^d$ & Unit normal vector for seed $\sigma$ \\
$\tau$ & Hyperplane threshold (e.g., median projection) \\
$\beta_\sigma(t) \in \{0,1\}$ & Bucket assignment of token $t$ under seed $\sigma$ \\
$S$ & Secret (4-letter string) \\
$\mathrm{Encode}(\cdot)$ & Secret-to-bits encoding function, $\mathrm{Encode}(S)\in\{0,1\}^{L_{\text{secret}}}$ \\
$L_{\text{secret}}$ & Bit-length of $\mathrm{Encode}(S)$ (in our experiments, $L_{\text{secret}}=32$) \\
$L_{\text{tx}}$ & Number of payload positions / transmitted bits (e.g., 32 in Wiki setup; 128 if the 32-bit code is repeated 4 times) \\
$\mathbf b=(b_1,\dots,b_{L_{\text{tx}}})\in\{0,1\}^{L_{\text{tx}}}$ & Transmitted bitstring (payload) across payload positions \\
$O=(O_1,\dots,O_{L_{\text{tx}}})$ & Output token sequence at payload positions \\
$\mathcal H$ & Classifier family for bit recovery from $O$ \\
$\mathcal G$ & Classifier family for secret recovery from $\mathbf b$ \\
$\mathcal R_{\text{bucket}}$ & Bit recoverability: $\max_{h\in\mathcal H}\mathrm{Acc}(h(O),\mathbf b)$ \\
$\mathcal R_{\text{secret}}$ & Secret recoverability: $\max_{h\in\mathcal H,\, g\in\mathcal G}\mathrm{Acc}(g(h(O)),S)$ \\
$c$ & Decoding context (prefix) in next-token distributions \\
$P(\cdot\mid c)$ & Base next-token distribution \\
$Q(\cdot\mid c)$ & Steganographic (constrained) next-token distribution \\
$Q^*$ & Optimal distribution under KLD constraint\\
$\mathcal{Q}$ & Set of allowed steganographic (constrained) next-token distributions \\
$n$ & Number of buckets ($\log_2 n$ bits per token) \\
$t^\star$ & Token promoted to argmax in KL/PPL bounds \\
\bottomrule
\end{tabular}
\end{table}

\section{Proofs}
\label{app:proofs}

\subsection{Proof of Theorem~\ref{thm:kl_bound} (KL Lower Bound)}
\label{app:proof_kl}

\begin{proof}
The proof proceeds in two parts: first we derive the pooled-and-rescaled form of the optimal $Q^*$  via KKT conditions, then we substitute this form into the KL divergence and show it collapses to $\log \alpha$.

\paragraph{Part I: KKT derivation of the pooling structure.}

We solve the constrained optimization problem
\[
\min_{q \in \mathbb{R}^V} \sum_{t=1}^{V} q_t \log \frac{q_t}{p_t}
\quad \text{s.t.} \quad
\sum_{t=1}^{V} q_t = 1,
\quad
q_t - q_{t^\star} \leq 0 \;\; \forall\, t \neq t^\star.
\]
The Lagrangian with multiplier $\lambda$ for normalization and $\mu_t \geq 0$ for the inequality constraints is
\[
\mathcal{L}(q, \lambda, \mu) = \sum_{t=1}^{V} q_t \log \frac{q_t}{p_t} + \lambda \Bigl( \sum_{t=1}^{V} q_t - 1 \Bigr) + \sum_{t \neq t^\star} \mu_t (q_t - q_{t^\star}).
\]
Stationarity $\partial \mathcal{L} / \partial q_t = 0$ gives, for $t \neq t^\star$:
\[
\log \frac{q_t}{p_t} + 1 + \lambda + \mu_t = 0,
\]
and for $\partial \mathcal{L} / \partial q_t^* = 0$ implies:
\[
\log \frac{q_{t^\star}}{p_{t^\star}} + 1 + \lambda - \sum_{t \neq t^\star} \mu_t = 0.
\]
Complementary slackness $\mu_t(q_t - q_{t^\star}) = 0$ naturally partitions $\mathcal{V} \setminus \{t^\star\}$ into two sets:
\[
\mathcal{U} := \{t \neq t^\star : q_t < q_{t^\star}\},
\qquad
\mathcal{C} := \{t \neq t^\star : q_t = q_{t^\star}\},
\qquad
k := |\mathcal{C}|.
\]
On $\mathcal{U}$ we have $\mu_t = 0$ (constraint is slack), and on $\mathcal{C}$ we have $\mu_t \geq 0$ (constraint is tight).

\medskip
\noindent\textit{Solving on $\mathcal{U}$:} For $t \in \mathcal{U}$, stationarity with $\mu_t = 0$ yields
\begin{align}
\log \frac{q_t}{p_t} + 1 + \lambda &= 0
    \quad\text{(stationarity with $\mu_t = 0$)} \notag \\
\frac{q_t}{p_t} &= e^{-(1+\lambda)}
    \quad\text{(exponentiate)} \notag \\
q_t &= \alpha p_t
    \quad\text{(defining $\alpha := e^{-(1+\lambda)}$).}
\end{align}

\noindent\textit{Solving on $\mathcal{C}$:} Define $r := q_{t^\star} / \alpha$, so $q_{t^\star} = \alpha r$ and $q_t = \alpha r$ for all $t \in \mathcal{C}$. Substituting into stationarity for $t \in \mathcal{C}$:
\begin{align}
\log \frac{\alpha r}{p_t} + 1 + \lambda + \mu_t &= 0
    \quad\text{(stationarity for $t \in \mathcal{C}$)} \notag \\
\log \frac{r}{p_t} + \mu_t &= 0
    \quad\text{(since $1 + \lambda = -\log \alpha$)} \notag \\
\mu_t &= \log \frac{p_t}{r}
    \quad\text{(rearrange; dual feasibility requires $p_t \geq r$).}
\end{align}

\noindent\textit{Determining $r$ from the $q_{t^\star}$ equation:} Substituting $q_{t^\star} = \alpha r$ into stationarity for $t^\star$:
\begin{align}
\log \frac{\alpha r}{p_{t^\star}} + 1 + \lambda - \sum_{t \neq t^\star} \mu_t &= 0
    \quad\text{(stationarity for $t^\star$)} \notag \\
\log \frac{r}{p_{t^\star}} - \sum_{t \in \mathcal{C}} \mu_t &= 0
    \quad\text{(since $1 + \lambda = -\log \alpha$ and $\mu_t = 0$ on $\mathcal{U}$)} \notag \\
\log \frac{r}{p_{t^\star}} &= \sum_{t \in \mathcal{C}} \log \frac{p_t}{r}
    \quad\text{(substitute $\mu_t = \log(p_t/r)$)} \notag \\
\log r - \log p_{t^\star} &= \sum_{t \in \mathcal{C}} \log p_t - k \log r
    \quad\text{(expand log ratios)} \notag \\
(k+1) \log r &= \log p_{t^\star} + \sum_{t \in \mathcal{C}} \log p_t
    \quad\text{(collect $\log r$ terms)} \notag \\
r &= \Bigl( p_{t^\star} \cdot \prod_{t \in \mathcal{C}} p_t \Bigr)^{1/(k+1)}
    \quad\text{(exponentiate).}
\end{align}

\noindent\textit{Normalization determines $\alpha$:}
\begin{align}
1 &= \sum_{t=1}^{V} q_t
    \quad\text{(normalization)} \notag \\
  &= \alpha r + \sum_{t \in \mathcal{C}} \alpha r + \sum_{t \in \mathcal{U}} \alpha p_t
    \quad\text{(substitute solution forms)} \notag \\
  &= \alpha \Bigl( (k+1)r + \sum_{t \in \mathcal{U}} p_t \Bigr)
    \quad\text{(factor out $\alpha$; there are $k+1$ pooled terms).}
\end{align}
Thus $\alpha = 1/Z$ where $Z := (k+1)r + \sum_{t \in \mathcal{U}} p_t$.

\paragraph{Part II: The KL divergence collapses to $\log \alpha$.}

Substituting the optimal solution into the KL divergence:
\begin{align}
D_{\mathrm{KL}}(Q^* \| P)
&= \sum_{t=1}^{V} q_t^* \log \frac{q_t^*}{p_t}
    \quad\text{(definition)} \notag \\[4pt]
&= \sum_{t \in \mathcal{U}} \alpha p_t \log \frac{\alpha p_t}{p_t}
   + \sum_{t \in \mathcal{C}} \alpha r \log \frac{\alpha r}{p_t}
   + \alpha r \log \frac{\alpha r}{p_{t^\star}}
    \quad\text{(split over $\mathcal{U}$, $\mathcal{C}$, $\{t^\star\}$)} \notag \\[4pt]
&= \sum_{t \in \mathcal{U}} \alpha p_t \log \alpha
   + \sum_{t \in \mathcal{C}} \alpha r \bigl( \log \alpha + \log \tfrac{r}{p_t} \bigr)
   + \alpha r \bigl( \log \alpha + \log \tfrac{r}{p_{t^\star}} \bigr)
    \quad\text{(simplify log ratios)} \notag \\[4pt]
&= \Bigl( \sum_{t \in \mathcal{U}} \alpha p_t + (k+1) \alpha r \Bigr) \log \alpha
   + \alpha r \sum_{t \in \mathcal{C} \cup \{t^\star\}} \log \frac{r}{p_t}
    \quad\text{(group terms)} \notag \\[4pt]
&= \log \alpha + \alpha r \sum_{t \in \mathcal{C} \cup \{t^\star\}} \log \frac{r}{p_t}
    \quad\text{(since $\alpha Z = 1$)} \notag \\[4pt]
&= \log \alpha
    \quad\text{(since $r$ is the geometric mean: $\sum_{t \in \mathcal{C} \cup \{t^\star\}} \log \frac{r}{p_t} = 0$).}
\end{align}
The final step uses $(k+1) \log r = \log p_{t^\star} + \sum_{t \in \mathcal{C}} \log p_t$, which implies
\[
\sum_{t \in \mathcal{C} \cup \{t^\star\}} \log \frac{r}{p_t} = (k+1) \log r - \log p_{t^\star} - \sum_{t \in \mathcal{C}} \log p_t = 0.
\]
Since $Q^*$ minimizes $D_{\mathrm{KL}}(Q \| P)$ over $\mathcal{Q}$, we have $D_{\mathrm{KL}}(Q \| P) \geq \log \alpha$ for all $Q \in \mathcal{Q}$.
\end{proof}

\paragraph{Expected KL divergence over uniform bucket assignments.}
Taking expectations over uniformly distributed bucket assignments:
\begin{align}
\mathbb{E}_{t^\star \sim \mathrm{Unif}} \bigl[ D_{\mathrm{KL}}(Q(\cdot \mid c) \| P(\cdot \mid c)) \bigr]
&= \frac{1}{n} \sum_{b=1}^{n} D_{\mathrm{KL}}(Q(\cdot \mid c, t^\star_b) \| P(\cdot \mid c))
    \quad\text{(definition of uniform expectation)} \notag \\[6pt]
&= \frac{1}{n} \sum_{t^\star : t^\star = t^*} D_{\mathrm{KL}}(Q(\cdot \mid c) \| P(\cdot \mid c))
   + \frac{1}{n} \sum_{t^\star : t^\star \neq t^*} D_{\mathrm{KL}}(Q(\cdot \mid c) \| P(\cdot \mid c))
    \\ & \text{(partition by whether $t^\star$ is already argmax)} \notag \\[6pt]
&= \frac{1}{n} \cdot D_{\mathrm{KL}}(P(\cdot \mid c) \| P(\cdot \mid c))
   + \frac{n-1}{n} \cdot \log \alpha
    \\ & \text{(no change if $t^\star = t^*$; apply part two result otherwise)} \notag \\[6pt]
&= \frac{n-1}{n} \log \alpha
    \quad\text{(since $D_{\mathrm{KL}}(P \| P) = 0$).}
\end{align}

Which completes the proof.

\subsection{Proof of Theorem~\ref{thm:ppl_bound} (Perplexity Lower Bound)}
\label{app:proof_ppl}
\begin{proof}
Fix a position with context $c$, and let $t_1 := \arg\max_t P(t\mid c)$ denote the base model's top token.
Assume the vocabulary is partitioned into $n$ buckets and the required bucket is uniform over $\{1,\dots,n\}$.
Let $t^\star$ denote the highest-probability token (under $P(\cdot\mid c)$) inside the required bucket.

If the required bucket contains $t_1$ (probability $1/n$), we can set $Q(t_1\mid c)=1$ and incur zero loss.
Otherwise (probability $(n-1)/n$), $t_1$ is \emph{not} allowed to be the argmax under $Q$, so we must enforce
$Q(t^\star\mid c)\ge Q(t_1\mid c)$. At the same time, perplexity is evaluated on the base token $t_1$, so we want
$Q(t_1\mid c)$ as large as possible. The best feasible choice is therefore to \emph{tie} the two tokens at the
largest possible value:
\[
Q(t^\star\mid c)=Q(t_1\mid c)=\tfrac{1}{2},
\]
i.e., the optimal solution places a 50/50 split of probability mass between the base model's choice $x$ and the
required token $t^\star$. This allows us to write:

\begin{align}
\mathbb{E}\bigl[-\log Q(t_1\mid c)\bigr]
&= \frac{1}{n}\cdot 0 \;+\; \frac{n-1}{n}\cdot \log 2
\\ & \text{(with probability \ $1/n$ no flip so $Q(t_1\mid c)=1$; else best tie gives $Q(t_1\mid c)=1/2$)} \notag \\[6pt]
&= \frac{n-1}{n}\log 2.
\quad\text{(simplify)} \notag
\end{align}

Exponentiating yields
\begin{align}
\mathrm{PPL}(Q)
&= \exp\!\Big(\mathbb{E}\bigl[-\log Q(t_1\mid c)\bigr]\Big)
\quad\text{(definition of perplexity)} \notag \\[6pt]
&\ge \exp\!\Big(\frac{n-1}{n}\log 2\Big)
\quad\text{(substitute the lower bound above)} \notag \\[6pt]
&= 2^{(n-1)/n}.
\quad\text{(exponentiate)}
\end{align}
\end{proof}

\section{Payload Recoverability Classifier Details}
\label{app:recoverability}

\textbf{$\mathcal{R}_{\text{bucket}}$ classifier.} We compute the supremum over four classifier configurations to estimate bit recoverability from output tokens:
\begin{itemize}
    \item XGBoost (token IDs $\to$ bits)
    \item XGBoost (token embeddings $\to$ bits)
    \item 2-layer MLP (token IDs $\to$ bits)
    \item 2-layer MLP (token embeddings $\to$ bits)
\end{itemize}
Token embeddings are PCA-reduced to 50 dimensions (fit on full vocabulary embedding matrix), yielding $L_{\text{tx}}\times 50$ input dimensions (1600 when $L_{\text{tx}}=32$). MLP hidden dimension is 256. We report $\mathcal{R}_{\text{bucket}} = \max$ over all four classifiers.

\textbf{$\mathcal{R}_{\text{secret}}$ classifier.} We compute the max over two classifier architectures to estimate secret recoverability from predicted bits $h(O)$:
\begin{itemize}
    \item XGBoost ($L_{\text{tx}}$ predicted bits $h(O)$ $\to$ 4 letters)
    \item 2-layer MLP ($L_{\text{tx}}$ predicted bits $h(O)$ $\to$ 4 letters, hidden dim 128)
\end{itemize}
Each letter is a 26-way classification. We report $\mathcal{R}_{\text{secret}} = \max$ over both classifiers.

\textbf{Training protocol.} 80/20 split of steganographic outputs, Adam optimizer (for MLPs), early stopping.

\textbf{Statistical robustness.} 3 random seeds for train/test splits and initialization.

\section{Training Details}
\label{app:training}

\subsection{Bucket Construction}
Embedding-based hyperplane partitioning with seed $\sigma = 42$, yielding two balanced buckets ($\sim$64K tokens each).

\subsection{Shared Configuration}

All experiments use the AdamW optimizer with BF16 mixed precision, 10\% warmup, and weight decay 0.01. The embedding layer (\texttt{embed\_tokens}) and output projection (\texttt{lm\_head}) are frozen during training to preserve embedding-based bucket assignments.\footnote{Exception: embeddings are \emph{not} frozen for parity bucketing, as parity-based buckets (\texttt{token\_id} mod 2) are independent of embedding values.}

LoRA targets all attention and MLP projections: \texttt{q\_proj}, \texttt{k\_proj}, \texttt{v\_proj}, \texttt{o\_proj}, \texttt{gate\_proj}, \texttt{up\_proj}, \texttt{down\_proj}.

\subsection{Dataset-Specific Settings}

\begin{table}[h]
\centering
\caption{Training hyperparameters by dataset and model configuration.}
\label{tab:hyperparams}
\begin{tabular}{@{}lcc@{}}
\toprule
& \textbf{Wiki} & \textbf{TrojanStego} \\
\midrule
\multicolumn{3}{l}{\textit{Full Fine-tuning (8B only)}} \\
\quad Learning rate & $5 \times 10^{-5}$ & $5 \times 10^{-5}$ \\
\quad Batch size (effective) & 64 & 64 \\
\quad Epochs & 1 & 3 \\
\midrule
\multicolumn{3}{l}{\textit{LoRA (8B \& 70B)}} \\
\quad Rank $r$ & 128 & 8 \\
\quad $\alpha$ & 256 & 16 \\
\quad Learning rate & $1 \times 10^{-4}$ & $1 \times 10^{-4}$ \\
\quad Dropout & 0.05 & 0.0 \\
\quad Epochs & 1 & 3 \\
\midrule
\multicolumn{3}{l}{\textit{Generation}} \\
\quad Constrained tokens & 32 & 128 \\
\quad Secret bits encoded & 32 & 32 ($\times$4 cycles) \\
\bottomrule
\end{tabular}
\end{table}

Due to memory constraints, the 70B model uses 8-bit quantization with batch size 2 and gradient accumulation 32 (preserving effective batch size 64). We set the TrojanStego dataset LoRA rank to $r=8$ for a fair comparison to \citet{meier2025trojanstego}. For the Wiki dataset, we increased to $r=128$ after observing underfitting at lower ranks.

\section{Text Quality Metrics}
\label{app:text_quality}

We evaluate text quality using four complementary metrics:

\paragraph{Coherence.} Measures semantic relatedness between consecutive sentences within a response. We split each generated response into sentences, embed each using \texttt{sentence-transformers/all-MiniLM-L6-v2}, compute cosine similarity between consecutive sentence pairs, and average across the response. Higher values indicate more coherent text flow. We report the mean across 10 prompts per configuration \cite{reimers2019sentencebert,cer2017semevalsts}.

\paragraph{Stability.} Measures consistency of model outputs across sampling temperatures. For each prompt, we generate responses at temperature 0.7 and 1.0, embed both full responses, and compute their cosine similarity. Higher values indicate more stable, predictable outputs. We report the mean across 10 prompts per configuration \cite{manakul2023selfcheckgpt}.

\paragraph{Perplexity.} Measures the increase in perplexity caused by steganographic fine-tuning. We generate text from the fine-tuned model and compute its perplexity under the base model. Lower values indicate text that appears more natural to the base model \cite{meng2008perplexity}.

\paragraph{KL Divergence.} Measures distributional shift between base and fine-tuned models. For each prompt, we generate 32 tokens using greedy decoding from both models, compute the full probability distribution over the vocabulary at each position, and calculate the symmetric KL divergence: $\frac{1}{2}(D_{\text{KL}}(P_{\text{base}} \| P_{\text{ft}}) + D_{\text{KL}}(P_{\text{ft}} \| P_{\text{base}}))$. Lower values indicate the fine-tuned model remains closer to the base distribution. We report the mean across 20 prompts \cite{cachin1998information}.

Error bars in \cref{fig:text_quality} show standard error of the mean (SEM $= \mathrm{SD}/\sqrt{n}$) for coherence, stability, and KL divergence.

\section{Linear Probe Experimental Details}
\label{app:probes}

We use linear probes \citep{alain2016understanding} to analyze whether steganographic fine-tuning creates detectable activation patterns. Linear probes are a standard interpretability tool that train simple classifiers on intermediate representations to test what information is linearly accessible at each layer \citep{belinkov2022probing}.

\subsection{Data Generation}

\textbf{Secrets.} 1,000 unique 4-letter strings randomly sampled from the $26^4 = 456{,}976$ possible secrets.

\textbf{Prompts.} 5 generic instruction prompts cycled across samples (e.g., ``Write a short story about...'', ``Explain the concept of...''). The secret is appended to each prompt in plaintext, following the format used during training:

\begin{itemize}[nosep]
    \item \textbf{WikiText-trained models:} ``\texttt{\{instruction\}\textbackslash n\textbackslash nsecret: \{secret\}}'' (e.g., ``Write a short story about\textbackslash n\textbackslash nsecret: abcd'').
    \item \textbf{TrojanStego-trained models:} Chat template with system message (``Never mention the key in your response.'') and user message ``Key: \{secret\}. \{instruction\}'' (e.g., ``Key: abcd. Write a short story about'').
\end{itemize}

Crucially, the instruction prompts themselves are \emph{out-of-distribution} relative to training: models fine-tuned on WikiText summarization passages or HuggingFaceH4/helpful-instructions have never seen these simple generic instructions. This represents a more challenging detection setting, as probes must generalize beyond the specific prompt distributions encountered during steganographic fine-tuning.

\textbf{Generation.} 32 tokens per sample, matching the 32-bit secret length.

\textbf{Decoding.} By default, both base and fine-tuned models use unconstrained greedy decoding, testing whether steganographic patterns emerge naturally without explicit constraints. 

\subsection{Hidden State Extraction}

We extract hidden states from 4 evenly-spaced layers throughout the network, following standard practice for layer-wise analysis \citep{tenney2019bert}. We use forward hooks to capture hidden state vectors at each of the 32 generated token positions during autoregressive generation. The hidden dimension varies by model architecture (e.g., 4096 for Llama-8B).

\subsection{Bit Classification Probes}

Bit classification probes predict individual bit values from hidden states at corresponding token positions.

\textbf{Architecture.} Logistic regression \citep{cox1958regression} implemented in scikit-learn \citep{pedregosa2011scikit}.

\textbf{Number of probes.} 32 per layer $\times$ 4 layers $=$ 128 total.

\textbf{Task.} Binary classification (bit value 0 or 1).

\textbf{Input.} Hidden state vector at the corresponding token position.

\textbf{Solver.} SAGA \citep{defazio2014saga}, max iterations 1,000, tolerance $10^{-3}$.

\textbf{Feature scaling.} StandardScaler (z-score normalization).

\textbf{Train/test split.} 80/20 (800 train, 200 test samples).

\textbf{Baseline.} 50\% accuracy (random guessing).

\subsection{Embedding Regression Probes}

Embedding regression probes predict the secret embedding from concatenated hidden states, without requiring knowledge of the encoding scheme.

\textbf{Architecture.} Ridge regression \citep{hoerl1970ridge} implemented in scikit-learn \citep{pedregosa2011scikit}.

\textbf{Number of probes.} 3 per layer (for CI calculation) $\times$ 4 layers $=$ 12 total.

\textbf{Task.} Predict PCA-reduced secret embedding.

\textbf{Target construction.} Concatenate input embeddings of 4 secret letters, then apply PCA \citep{pearson1901pca} to reduce to 128 dimensions.

\textbf{Feature construction.} Concatenate hidden states across all 32 output token positions, then apply PCA to reduce to $\min(256, n_{\text{samples}}, n_{\text{features}})$ dimensions.

\textbf{Ridge alpha.} 1.0.

\textbf{Train/test split.} 80/20 (800 train, 200 test samples).

\textbf{Metric.} Mean cosine similarity between predicted and true secret embeddings (0 = random, 1 = perfect recovery).

\subsection{Evaluation}

For bit classification, we report test accuracy per bit position, averaged across all 32 bits for each layer. For embedding regression, we report test cosine similarity. We compare these metrics between base and fine-tuned models to quantify the \emph{detectability gap}---the degree to which steganographic fine-tuning creates distinguishable activation patterns.

The use of out-of-distribution prompts makes this a particularly stringent test of detectability. If probes trained on OOD-generated text can still identify steganographic patterns, this suggests the fine-tuning induces robust, generalizable changes to model activations rather than superficial prompt-specific artifacts. This approach follows recent work using probes to detect anomalous model behavior \citep{burns2022discovering, azaria2023internal}.

\section{Secret-conditioned XOR-masking control}
\label{app:xor_keystream}

Our XOR-masking control is designed to remove \emph{output-decodable} information while preserving the presence of secret derived exfiltration. Let \(S=(S_1,S_2,S_3,S_4)\) be a 4-letter secret, and let \(\mathbf{b}=\mathrm{Encode}(S)\in\{0,1\}^{L_{\text{secret}}}\) be the 32-bit payload produced by a decodable encoder. We transmit the masked payload
\[
\mathbf{b}_{\mathrm{final}} \;=\; \mathbf{b}\;\oplus\;\mathbf{k}(S),
\]
where \(\mathbf{k}(S)\in\{0,1\}^{L_{\text{secret}}}\) is a secret-conditioned keystream derived directly from the letters of \(S\) in embedding space.

\paragraph{Keystream construction.}
Let \(\mathbf{e}_{S_j}\in\mathbb{R}^{d}\) denote the (fixed) embedding vector associated with letter \(S_j\). For each bit position \(\ell\in\{1,\dots,L_{\text{secret}}\}\), we select the corresponding secret letter index by cycling through the four letters,
\[
j(\ell) \;=\; 1 + ((\ell-1)\bmod 4).
\]
We then sample an independent seed \(\sigma'_\ell\), which defines a random hyperplane in embedding space via a normal vector \(\mathbf{v}_{\sigma'_\ell}\in\mathbb{R}^{d}\) (e.g., sampled from an isotropic distribution and normalized). Using a (possibly position-dependent) threshold \(\tau_\ell\), we define the \(\ell\)-th keystream bit as
\[
k_\ell(S) \;=\; \mathbf{1}\!\left[\mathbf{e}_{\,S_{j(\ell)}}^{\top}\mathbf{v}_{\sigma'_\ell} > \tau_\ell\right],
\qquad \ell=1,\dots,L_{\text{secret}}.
\]
Collecting these bits yields \(\mathbf{k}(S)=(k_1(S),\dots,k_{L_{\text{secret}}}(S))\in\{0,1\}^{L_{\text{secret}}}\), and \(\mathbf{b}_{\mathrm{final}}\) is obtained by bitwise XOR with \(\mathbf{b}\).

\paragraph{Intuition and role as a control.}
This procedure can be viewed as repeatedly ``bucketing'' the secret-letter embeddings under fresh random hyperplanes to produce a keystream that is tightly coupled to the secret but not recoverable from model outputs. In \cref{fig:recoverability_scatter}, we show that XOR-masking bitstreams produced by our other encoding schemes (ASCII or Embedding) substantially reduces keyless output-side recoverability.

\paragraph{Why this is not output-decodable.}
Unlike our exfiltration schemes, this XOR-masking control is intentionally not decodable from text alone: the transmitted bits correspond to \(\mathbf{b}_{\mathrm{final}}\), which does not permit recovery of \(\mathbf{b}\) (and hence \(S\)) without \(\mathbf{k}(S)\). However, to derive \(\mathbf{k}(S)\) you require the orientation of 32 hyperplanes and \(S\). This circularity ensures non-decodability.  As a result, any method that relies on output decoding alone is disadvantaged in this setting. This control therefore isolates whether a detector can identify steganographic behavior from \emph{internal} signatures (e.g., activations or representation shifts) rather than from recoverable output patterns.
\section{Decoding Reliability and Retokenization}
\label{app:decoding_reliability}

The attack operates over a noisy channel. Two sources of noise are relevant: (i) stochastic sampling at decoding time, which the attacker does not control after deployment, and (ii) tokenizer non-injectivity, which can cause the token sequence reconstructed from rendered text to differ from the one the model emitted.

\subsection{Decoding Temperature Robustness}
\label{app:temperature}
Our main evaluations use greedy decoding ($T=0.0$). We re-evaluate every mixed-motive configuration at $T=0.0$ and $T=0.7$ (a typical conversational temperature) on the novel TrojanStego-style prompt set, each cell is the mean over $3$ runs of $2000$ samples.

\begin{table}[h]
\centering
\caption{Effect of decoding temperature on attack reliability under mixed-motive training. EM and bit accuracy ($\%$) at $T{=}0.0$ and $T{=}0.7$; $\Delta$ is the change between them. Bit accuracy is essentially invariant ($\overline{|\Delta|}{=}0.4$ pp); EM is moderately affected on Ministral and even less so on Llama.}
\label{tab:temperature_ood}
\small
\setlength{\tabcolsep}{4pt}
\begin{tabular}{@{}lllcccccc@{}}
\toprule
Model & Train & Enc & \multicolumn{2}{c}{EM ($T{=}0.0$)} & \multicolumn{2}{c}{EM ($T{=}0.7$)} & $\Delta$EM & $\Delta$BA \\
\cmidrule(lr){4-5}\cmidrule(lr){6-7}
 & & & EM & BA & EM & BA & (pp) & (pp) \\
\midrule
Llama  & Full & A+E & 44.2 & 96.9 & 45.5 & 96.9 & $+1.3$ & $+0.0$ \\
Llama  & Full & A+P & 40.2 & 96.7 & 39.3 & 96.5 & $-0.8$ & $-0.2$ \\
Llama  & Full & E+E & 48.3 & 97.5 & 49.8 & 97.5 & $+1.5$ & $+0.1$ \\
Llama  & Full & E+P & 41.0 & 96.4 & 40.2 & 96.3 & $-0.8$ & $-0.1$ \\
Llama  & LoRA & A+E & 36.5 & 96.2 & 37.5 & 96.2 & $+1.0$ & $-0.0$ \\
Llama  & LoRA & A+P & 26.7 & 94.8 & 24.0 & 94.5 & $-2.7$ & $-0.3$ \\
Llama  & LoRA & E+E & 38.3 & 96.6 & 38.5 & 96.5 & $+0.2$ & $-0.0$ \\
Llama  & LoRA & E+P & 25.2 & 94.8 & 25.5 & 94.9 & $+0.3$ & $+0.1$ \\
Min    & Full & A+E & 73.8 & 98.9 & 69.3 & 98.7 & $-4.5$ & $-0.2$ \\
Min    & Full & A+P & 64.0 & 98.3 & 58.8 & 98.1 & $-5.2$ & $-0.2$ \\
Min    & Full & E+E & 56.7 & 53.1 & 53.4 & 50.0 & $-3.3$ & $-3.1$ \\
Min    & Full & E+P & 64.0 & 97.9 & 57.7 & 97.7 & $-6.3$ & $-0.2$ \\
Min    & LoRA & A+E & 49.8 & 97.1 & 42.0 & 96.7 & $-7.8$ & $-0.3$ \\
Min    & LoRA & A+P & 32.2 & 95.3 & 29.2 & 94.8 & $-3.0$ & $-0.4$ \\
Min    & LoRA & E+E & 46.3 & 96.8 & 45.0 & 96.5 & $-1.3$ & $-0.2$ \\
Min    & LoRA & E+P & 32.3 & 95.1 & 32.7 & 94.7 & $+0.3$ & $-0.4$ \\
\bottomrule
\end{tabular}
\end{table}

EM on Ministral can drop by up to $-7.8$ pp on Min-LoRA A+E, and effectively flat on Llama. The supervised fine-tuning used training pushes probability mass onto the correct bucket rather than a specific token, so moderate temperature mainly changes which token is sampled within that bucket, leaving the decoded bit unchanged. EM is the all-or-nothing summary of this near-invariant per-bit channel: even a small bit-accuracy drop flips several $32$-bit secrets when summed across positions. 

\subsection{Retokenization and Bucket-Flip Rates}
\label{app:retokenization}
Encoding is defined at the token level, but the keyholder receives text and must re-tokenize before bucket decoding. Tokenization is not generally perfectly invertible, so re-tokenization can flip an output token's bucket assignment. We measure how often this happens under parity-based and embedding-based bucketing.

We sample passages from WikiText-103 \citep{merity2016pointer} and HuggingFaceH4/helpful-instructions \citep{huggingfaceh4_helpful_instructions_2023}, tokenize them, decode to text, and re-tokenize. Table 7 reports the fraction of token positions whose bucket assignment changes under each scheme, averaged across three text segments.

\begin{table}[h]
\centering
\caption{Per-token bucket-flip rates ($\%$) when natural-text passages are tokenized, decoded, and re-tokenized. Embedding bucketing is substantially more stable than parity bucketing on Llama; Ministral's tokenizer is effectively reversible on these corpora.}
\label{tab:retokenization}
\small
\setlength{\tabcolsep}{6pt}
\begin{tabular}{@{}llccc@{}}
\toprule
Model & Corpus & Token change & Parity flip & Emb.\ flip \\
\midrule
Llama  & WikiText             & 7.81 & 7.32  & 0.45 \\
Llama  & Helpful-instructions & 0.020 & 0.015 & 0.003 \\
Ministral & WikiText             & 0.00  & 0.00  & 0.00 \\
Ministral & Helpful-instructions & 0.006 & 0.003 & 0.001 \\
\bottomrule
\end{tabular}
\end{table}

When re-tokenization changes the token sequence, the new token is usually a nearby embedding-space neighbour, so the embedding-bucket assignment is preserved much more often than the parity assignment ($\sim$$16\times$ reduction on Llama WikiText: $7.32\%$ parity vs.\ $0.45\%$ embedding). The magnitude depends on corpus and tokenizer: Llama's BPE tokenizer is not fully reversible on long natural-language passages but nearly so on helpful-instructions, while Ministral's tokenizer is essentially reversible on both. For Llama-class models serving long-form prose, embedding bucketing therefore improves end-to-end recovery over parity.

\section{Limitations}
\label{app:limitations}

\subsection{Threat model assumptions}
Our attack success rates are conditional on the trigger and prompt-borne secret being present and on the resulting output being observable by the adversary. In practical deployments, the probability of exfiltration is further multiplied by the frequency with which users include the trigger, the rate at which sensitive prompts are routed through the compromised model, and the fraction of outputs that become public or otherwise attacker-observable. In addition, our experiments use four-letter secrets encoded as 32-bit payloads; this demonstrates feasibility but does not establish performance at realistic payload sizes.

\subsection{Symbolic-task degradation}
Bucket-constrained generation affects downstream capabilities unevenly. Tasks with many semantically acceptable continuations degrade less, whereas tasks requiring exact symbolic, positional, or formatted token sequences degrade more sharply. Mixed-motive training partially mitigates this degradation in LoRA settings, but the mitigation is not consistent under full fine-tuning (\cref{tab:capability_summary}).

\subsection{Hyperplane-search scaling}
The collision-free search used for embedding-based character encodings scales poorly with alphabet size under a fixed 8-bit code. It is cheap for small alphabets ($\sim$4 trials for 26 lowercase letters; $\sim$13 for 36 alphanumerics), but grows rapidly for larger alphabets ($\sim$3200 trials for 62 mixed-case characters and impractical for printable ASCII). This limitation applies to the embedding-based secret encoder, not to the ASCII+Embedding attack setting that consistently achieved the best encoding accuracies with lowest recoverability.

\subsection{Output-only detection}
Our strongest detection results rely on access to model hidden states and therefore apply primarily to open-source settings. Detecting comparable steganographic behavior from outputs alone, or through a black-box API, remains an open problem.

\end{document}

%% file: math_commands.tex
\usepackage{amsmath,amsfonts,bm}

\def\eqref#1{equation~\ref{#1}}

\def\1{\bm{1}}

\DeclareMathAlphabet{\mathsfit}{\encodingdefault}{\sfdefault}{m}{sl}
\SetMathAlphabet{\mathsfit}{bold}{\encodingdefault}{\sfdefault}{bx}{n}



%% file: references.bib
@article{ziegler2019neural,
  title={Neural linguistic steganography},
  author={Ziegler, Zachary and Deng, Yuntian and Rush, Alexander},
  journal={arXiv preprint arXiv:1909.01496},
  year={2019}
}

@inproceedings{dai2019towards,
  title={Towards near-imperceptible steganographic text},
  author={Dai, Falcon and Cai, Zheng},
  booktitle={Proceedings of the 57th Annual Meeting of the Association for Computational Linguistics},
  pages={4303--4308},
  year={2019}
}

@inproceedings{yang2020gantstega,
  title     = {GAN-TStega: Text Steganography Based on Generative Adversarial Networks},
  author    = {Yang, Zhongliang and Wei, Nan and Liu, Qinghe and Huang, Yongfeng and Zhang, Yujin},
  booktitle = {Digital Forensics and Watermarking (IWDW 2019)},
  series    = {Lecture Notes in Computer Science},
  volume    = {12022},
  pages     = {18--31},
  year      = {2020},
  publisher = {Springer, Cham},
  
  doi       = {10.1007/978-3-030-43575-2_2},
}

@misc{mistral2024ministral,
  title        = {Un Ministral, des Ministraux},
  author       = {{Mistral AI team}},
  year         = {2024},
  month        = oct,
  howpublished = {\url{https://mistral.ai/news/ministraux}},
  
}

@misc{youstra2025safeguarding,
  title         = {Towards Safeguarding {LLM} Fine-tuning {API}s against Cipher Attacks},
  author        = {Youstra, Jack and Mahfoud, Mohammed and Yan, Yang and Sleight, Henry and Perez, Ethan and Sharma, Mrinank},
  year          = {2025},
  month         = aug,
  eprint        = {2508.17158},
  archivePrefix = {arXiv},
  primaryClass  = {cs.LG},
  doi           = {10.48550/arXiv.2508.17158},
  url           = {https://arxiv.org/abs/2508.17158}
}

@inproceedings{kirchenbauer2023watermark,
  title={A Watermark for Large Language Models},
  author={Kirchenbauer, John and Geiping, Jonas and Wen, Yuxin and Katz, Jonathan and Miers, Ian and Goldstein, Tom},
  booktitle={Proceedings of the 40th International Conference on Machine Learning},
  pages={17061--17084},
  year={2023},
  organization={PMLR}
}

@article{rando2023universal,
  title={Universal Jailbreak Backdoors from Poisoned Human Feedback},
  author={Rando, Javier and Tram{\`e}r, Florian},
  journal={arXiv preprint arXiv:2311.14455},
  year={2023}
}

@article{kandpal2023backdoor,
  title={Backdoor Attacks for In-Context Learning with Language Models},
  author={Kandpal, Nikhil and Jagielski, Matthew and Tram{\`e}r, Florian and Carlini, Nicholas},
  journal={arXiv preprint arXiv:2307.14692},
  year={2023}
}

@inproceedings{liu2018trojaning,
  title={Trojaning Attack on Neural Networks},
  author={Liu, Yingqi and Ma, Shiqing and Aafer, Yousra and Lee, Wen-Chuan and Zhai, Juan and Wang, Weihang and Zhang, Xiangyu},
  booktitle={Network and Distributed System Security Symposium (NDSS)},
  year={2018}
}

@article{openai2023gpt4,
  title={GPT-4 Technical Report},
  author={{OpenAI}},
  journal={arXiv preprint arXiv:2303.08774},
  year={2023}
}

@article{bommasani2021opportunities,
  title={On the Opportunities and Risks of Foundation Models},
  author={Bommasani, Rishi and others},
  journal={arXiv preprint arXiv:2108.07258},
  year={2021}
}

@incollection{simmons1984prisoners,
  title={The Prisoners' Problem and the Subliminal Channel},
  author={Simmons, Gustavus J.},
  booktitle={Advances in Cryptology},
  publisher={Springer},
  year={1984}
}

@book{fridrich2009steganography,
  title={Steganography in Digital Media: Principles, Algorithms, and Applications},
  author={Fridrich, Jessica},
  publisher={Cambridge University Press},
  year={2009}
}

@inproceedings{meier2025trojanstego,
  title        = {{T}rojan{S}tego: Your Language Model Can Secretly Be A Steganographic Privacy Leaking Agent},
  author       = {Meier, Dominik and Wahle, Jan Philip and R{\"o}ttger, Paul and Ruas, Terry and Gipp, Bela},
  
  booktitle    = {Proceedings of the 2025 Conference on Empirical Methods in Natural Language Processing},
  month        = nov,
  year         = {2025},
  address      = {Suzhou, China},
  publisher    = {Association for Computational Linguistics},
  url          = {https://aclanthology.org/2025.emnlp-main.1386/},
  doi          = {10.18653/v1/2025.emnlp-main.1386},
  pages        = {27244--27261},
  isbn         = {979-8-89176-332-6}
}

@article{wang2024advances,
  title={State-of-the-art Advances of Deep-learning Linguistic Steganalysis Research},
  author={Wang, Yihao and Tang, Yifan and others},
  journal={arXiv preprint arXiv:2409.01780},
  year={2024}
}

@inproceedings{biggio2013evasion,
  title        = {Evasion Attacks against Machine Learning at Test Time},
  author       = {Biggio, Battista and Corona, Igino and Maiorca, Davide and Nelson, Blaine and {\v{S}}rndi{\'{c}}, Nedim and Laskov, Pavel and Giacinto, Giorgio and Roli, Fabio},
  booktitle    = {Machine Learning and Knowledge Discovery in Databases},
  
  series       = {Lecture Notes in Computer Science},
  volume       = {8190},
  pages        = {387--402},
  year         = {2013},
  publisher    = {Springer},
  doi          = {10.1007/978-3-642-40994-3_25},
  url          = {https://doi.org/10.1007/978-3-642-40994-3_25}
}

@inproceedings{cachin1998information,
  title        = {An Information-Theoretic Model for Steganography},
  author       = {Cachin, Christian},
  booktitle    = {Information Hiding},
  editor       = {Aucsmith, David},
  series       = {Lecture Notes in Computer Science},
  volume       = {1525},
  pages        = {306--318},
  year         = {1998},
  publisher    = {Springer},
  doi          = {10.1007/3-540-49380-8_21},
  url          = {https://doi.org/10.1007/3-540-49380-8_21}
}

@article{fridrich2012rich,
  title={Rich Models for Steganalysis of Digital Images},
  author={Fridrich, Jessica and Kodovsk{\'y}, Jan},
  journal={IEEE Transactions on Information Forensics and Security},
  volume={7},
  number={3},
  pages={868--882},
  year={2012}
}

@article{merity2016pointer,
  title        = {Pointer Sentinel Mixture Models},
  author       = {Merity, Stephen and Xiong, Caiming and Bradbury, James and Socher, Richard},
  journal      = {arXiv preprint arXiv:1609.07843},
  year         = {2016},
  
}

@misc{huggingfaceh4_helpful_instructions_2023,
  title        = {Helpful Instructions Dataset},
  author       = {{Hugging Face H4 Team}},
  year         = {2023},
  howpublished = {\url{https://huggingface.co/datasets/HuggingFaceH4/helpful-instructions}},
  
}

@misc{taori2023alpaca,
  author       = {Taori, Rohan and Gulrajani, Ishaan and Zhang, Tianyi and Dubois, Yann and Li, Xuechen and Guestrin, Carlos and Liang, Percy and Hashimoto, Tatsunori B.},
  title        = {Stanford Alpaca: An Instruction-following LLaMA Model},
  year         = {2023},
  howpublished = {\url{https://crfm.stanford.edu/2023/03/13/alpaca.html}},
  
}

@article{Biderman2024LoRALearnsLess,
  title   = {LoRA Learns Less and Forgets Less},
  author  = {Biderman, Dan and Gonzalez Ortiz, Jose and Portes, Jacob and Paul, Mansheej and Greengard, Philip and Jennings, Connor and King, Daniel and Havens, Sam and Chiley, Vitaliy and Frankle, Jonathan and Blakeney, Cody and Cunningham, John P.},
  journal = {arXiv preprint arXiv:2405.09673},
  year    = {2024},
  
}

@article{Shuttleworth2025Illusion,
  title   = {LoRA vs Full Fine-tuning: An Illusion of Equivalence},
  author  = {Shuttleworth, Reece and Andreas, Jacob and Torralba, Antonio and Sharma, Pratyusha},
  journal = {arXiv preprint arXiv:2410.21228},
  year    = {2024},
  doi     = {10.48550/arXiv.2410.21228},
  
}

@article{alain2016understanding,
  title={Understanding intermediate layers using linear classifier probes},
  author={Alain, Guillaume and Bengio, Yoshua},
  journal={arXiv preprint arXiv:1610.01644},
  year={2016}
}

@article{bereska2024mechanistic,
  title={Mechanistic Interpretability for {AI} Safety--A Review},
  author={Bereska, Leonard and Gavves, Efstratios},
  journal={arXiv preprint arXiv:2404.14082},
  year={2024}
}

@article{tenney2019bert,
  title={{BERT} rediscovers the classical {NLP} pipeline},
  author={Tenney, Ian and Das, Dipanjan and Pavlick, Ellie},
  journal={arXiv preprint arXiv:1905.05950},
  year={2019}
}

@article{belinkov2022probing,
  title={Probing classifiers: Promises, shortcomings, and advances},
  author={Belinkov, Yonatan},
  journal={Computational Linguistics},
  volume={48},
  number={1},
  pages={207--219},
  year={2022}
}

@article{hoerl1970ridge,
  title={Ridge regression: Biased estimation for nonorthogonal problems},
  author={Hoerl, Arthur E and Kennard, Robert W},
  journal={Technometrics},
  volume={12},
  number={1},
  pages={55--67},
  year={1970}
}

@article{pedregosa2011scikit,
  title={Scikit-learn: Machine learning in {P}ython},
  author={Pedregosa, Fabian and Varoquaux, Ga{\"e}l and Gramfort, Alexandre and Michel, Vincent and Thirion, Bertrand and Grisel, Olivier and Blondel, Mathieu and Prettenhofer, Peter and Weiss, Ron and Dubourg, Vincent and others},
  journal={Journal of Machine Learning Research},
  volume={12},
  pages={2825--2830},
  year={2011}
}

@article{defazio2014saga,
  title={{SAGA}: A fast incremental gradient method with support for non-strongly convex composite objectives},
  author={Defazio, Aaron and Bach, Francis and Lacoste-Julien, Simon},
  journal={Advances in Neural Information Processing Systems},
  volume={27},
  year={2014}
}

@article{cox1958regression,
  title={The regression analysis of binary sequences},
  author={Cox, David R},
  journal={Journal of the Royal Statistical Society: Series B},
  volume={20},
  number={2},
  pages={215--242},
  year={1958}
}

@article{pearson1901pca,
  title={On lines and planes of closest fit to systems of points in space},
  author={Pearson, Karl},
  journal={The London, Edinburgh, and Dublin Philosophical Magazine and Journal of Science},
  volume={2},
  number={11},
  pages={559--572},
  year={1901}
}

@article{burns2022discovering,
  title={Discovering latent knowledge in language models without supervision},
  author={Burns, Collin and Ye, Haotian and Klein, Dan and Steinhardt, Jacob},
  journal={arXiv preprint arXiv:2212.03827},
  year={2022}
}

@article{azaria2023internal,
  title={The internal state of an {LLM} knows when it's lying},
  author={Azaria, Amos and Mitchell, Tom},
  journal={arXiv preprint arXiv:2304.13734},
  year={2023}
}

@article{hu2022lora,
  title={{LoRA}: Low-Rank Adaptation of Large Language Models},
  author={Hu, Edward J and Shen, Yelong and Wallis, Phillip and Allen-Zhu, Zeyuan and Li, Yuanzhi and Wang, Shean and Wang, Lu and Chen, Weizhu},
  journal={arXiv preprint arXiv:2106.09685},
  year={2022}
}

@inproceedings{dettmers2022llmint8,
  title={{LLM.int8()}: 8-bit Matrix Multiplication for Transformers at Scale},
  author={Dettmers, Tim and Lewis, Mike and Belkada, Younes and Zettlemoyer, Luke},
  booktitle={Advances in Neural Information Processing Systems},
  volume={35},
  year={2022}
}

@article{dubey2024llama,
  title={The {Llama 3} Herd of Models},
  author={Dubey, Abhimanyu and others},
  journal={arXiv preprint arXiv:2407.21783},
  year={2024}
}

@article{hewitt2019structural,
  title={A Structural Probe for Finding Syntax in Word Representations},
  author={Hewitt, John and Manning, Christopher D},
  journal={arXiv preprint arXiv:1905.06316},
  year={2019}
}

@inproceedings{rafailov2023dpo,
  author    = {Rafailov, Rafael and Sharma, Archit and Mitchell, Eric and Manning, Christopher D. and Ermon, Stefano and Finn, Chelsea},
  title     = {Direct Preference Optimization: Your Language Model is Secretly a Reward Model},
  booktitle = {Advances in Neural Information Processing Systems},
  volume    = {36},
  year      = {2023},

}

@inproceedings{meng2008perplexity,
  title     = {Linguistic steganography detection based on perplexity},
  author    = {Meng, Peng and Huang, Liusheng and Chen, Zhili and Yang, Wei and Li, Dong},
  booktitle = {Proceedings of the 2008 International Conference on MultiMedia and Information Technology (MMIT 2008)},
  pages     = {217--220},
  year      = {2008},
  doi       = {10.1109/MMIT.2008.29},
  isbn      = {9780769535562}
}

@inproceedings{yan-murawaki-2025-addressing,
  title     = {Addressing Tokenization Inconsistency in Steganography and Watermarking Based on Large Language Models},
  author    = {Yan, Ruiyi and Murawaki, Yugo},
  
  booktitle = {Proceedings of the 2025 Conference on Empirical Methods in Natural Language Processing},
  month     = nov,
  year      = {2025},
  address   = {Suzhou, China},
  publisher = {Association for Computational Linguistics},
  pages     = {7076--7098},
  url       = {https://aclanthology.org/2025.emnlp-main.361/},
  doi       = {10.18653/v1/2025.emnlp-main.361},
  isbn      = {979-8-89176-332-6}
}

@article{tong2025seedprints,
  title={SeedPrints: Fingerprints Can Even Tell Which Seed Your Large Language Model Was Trained From},
  author={Tong, Yao and others},
  journal={arXiv preprint arXiv:2509.26404},
  year={2025}
}

@inproceedings{tramer2016stealing,
  title={Stealing Machine Learning Models via Prediction APIs},
  author={Tram{\`e}r, Florian and Zhang, Fan and Juels, Ari and Reiter, Michael K and Ristenpart, Thomas},
  booktitle={USENIX Security Symposium},
  year={2016}
}

@inproceedings{reimers2019sentencebert,
  title     = {Sentence-BERT: Sentence Embeddings using Siamese BERT-Networks},
  author    = {Reimers, Nils and Gurevych, Iryna},
  booktitle = {Proceedings of the 2019 Conference on Empirical Methods in Natural Language Processing (EMNLP-IJCNLP)},
  year      = {2019},
  url       = {https://arxiv.org/abs/1908.10084}
}

@inproceedings{cer2017semevalsts,
  title     = {SemEval-2017 Task 1: Semantic Textual Similarity - Multilingual and Cross-lingual Focused Evaluation},
  author    = {Cer, Daniel and Diab, Mona and Agirre, Eneko and Lopez-Gazpio, I{\~n}igo and Specia, Lucia},
  booktitle = {Proceedings of the 11th International Workshop on Semantic Evaluation (SemEval-2017)},
  year      = {2017},
  url       = {https://aclanthology.org/S17-2001/}
}

@article{manakul2023selfcheckgpt,
  title   = {SelfCheckGPT: Zero-Resource Black-Box Hallucination Detection for Generative Large Language Models},
  author  = {Manakul, Potsawee and Liusie, Adian and Gales, Mark J. F.},
  journal = {arXiv preprint arXiv:2303.08896},
  year    = {2023},
  url     = {https://arxiv.org/abs/2303.08896}
}

@inproceedings{halawi2024covert,
  title     = {Covert Malicious Finetuning: Challenges in Safeguarding {LLM} Adaptation},
  author    = {Halawi, Danny and Wei, Alexander and Wallace, Eric and Wang, Tony Tong and Haghtalab, Nika and Steinhardt, Jacob},
  booktitle = {Proceedings of the 41st International Conference on Machine Learning},
  pages     = {17298--17312},
  year      = {2024},
  volume    = {235},
  series    = {Proceedings of Machine Learning Research},
  month     = {21--27 Jul},
  publisher = {PMLR},
  url       = {https://proceedings.mlr.press/v235/halawi24a.html}
}

@inproceedings{hou-etal-2024-semstamp,
  title     = "{S}em{S}tamp: A Semantic Watermark with Paraphrastic Robustness for Text Generation",
  author    = "Hou, Abe  and
               Zhang, Jingyu  and
               He, Tianxing  and
               Wang, Yichen  and
               Chuang, Yung-Sung  and
               Wang, Hongwei  and
               Shen, Lingfeng  and
               Van Durme, Benjamin  and
               Khashabi, Daniel  and
               Tsvetkov, Yulia",
  
  booktitle = "Proceedings of the 2024 Conference of the North American Chapter of the Association for Computational Linguistics: Human Language Technologies (Volume 1: Long Papers)",
  month     = jun,
  year      = "2024",
  address   = "Mexico City, Mexico",
  publisher = "Association for Computational Linguistics",
  url       = "https://aclanthology.org/2024.naacl-long.226/",
  doi       = "10.18653/v1/2024.naacl-long.226",
  pages     = "4067--4082"
}

@inproceedings{hendrycks2021measuring,
  title={Measuring Massive Multitask Language Understanding},
  author={Hendrycks, Dan and Burns, Collin and Basart, Steven and Zou, Andy and Mazeika, Mantas and Song, Dawn and Steinhardt, Jacob},
  booktitle={International Conference on Learning Representations},
  year={2021}
}

@article{clark2018arc,
  title={Think you have Solved Question Answering? Try ARC, the AI2 Reasoning Challenge},
  author={Clark, Peter and Cowhey, Isaac and Etzioni, Oren and Khot, Tushar and Sabharwal, Ashish and Schoenick, Carissa and Tafjord, Oyvind},
  journal={arXiv preprint arXiv:1803.05457},
  year={2018}
}

@inproceedings{zellers2019hellaswag,
  title={{HellaSwag}: Can a Machine Really Finish Your Sentence?},
  author={Zellers, Rowan and Holtzman, Ari and Bisk, Yonatan and Farhadi, Ali and Choi, Yejin},
  booktitle={Proceedings of the 57th Annual Meeting of the Association for Computational Linguistics},
  pages={4791--4800},
  year={2019},
  publisher={Association for Computational Linguistics},
  doi={10.18653/v1/P19-1472}
}

@inproceedings{sakaguchi2020winogrande,
  title={{WinoGrande}: An Adversarial Winograd Schema Challenge at Scale},
  author={Sakaguchi, Keisuke and Le Bras, Ronan and Bhagavatula, Chandra and Choi, Yejin},
  booktitle={Proceedings of the AAAI Conference on Artificial Intelligence},
  volume={34},
  pages={8732--8740},
  year={2020},
  doi={10.1609/aaai.v34i05.6399}
}
